\theoremstyle{remark}
\newtheorem{theorem}{Theorem}
\newtheorem{corollary}{Corollary}
\newtheorem{assumption}{Assumption}
\newtheorem{prop}{Proposition}
\begin{document}

\begin{frontmatter}



\title{Connection-Aware P2P Trading: Simultaneous Trading and Peer Selection\tnoteref{label1}}


\author[first]{Cheng Feng}
\author[first]{Kedi Zheng}
\author[first]{Lanqing Shan}
\author[2]{Hani Alers}
\author[first]{Qixin Chen}
\author[2]{Lampros Stergioulas}
\author[first]{Hongye Guo\corref{cor1}}

\affiliation[first]{organization={Department of Electrical Engineering, Tsinghua University, Beijing 100084, China}}
\affiliation[2]{organization={The Hague University of Applied Sciences, 2521 EN Den Haag, Netherlands}}

\cortext[cor1]{Corresponding Author: Hongye Guo (Email: hyguo@mail.tsinghua.edu.cn)}
\tnotetext[label1]{This work is supported in part by the DATALESs project jointly financed by National Natural Science Foundation of China (NSFC) and Netherlands Organization for Scientific Research (NWO), and in part by the Special Funds of National Natural Science Foundation of China. The NSFC grant numbers are 52161135201 and 72342007. The NWO project number is 482.20.602.}

\begin{abstract}
Peer-to-peer (P2P) trading is seen as a viable solution to handle the growing number of distributed energy resources in distribution networks. However, when dealing with large-scale consumers, there are several challenges that must be addressed. One of these challenges is limited communication capabilities. Additionally, prosumers may have specific preferences when it comes to trading. Both can result in serious asynchrony in peer-to-peer trading, potentially impacting the effectiveness of negotiations and hindering convergence before the market closes. This paper introduces a connection-aware P2P trading algorithm designed for extensive prosumer trading. The algorithm facilitates asynchronous trading while respecting prosumer's autonomy in trading peer selection, an often overlooked aspect in traditional models. In addition, to optimize the use of limited connection opportunities, a smart trading peer connection selection strategy is developed to guide consumers to communicate strategically to accelerate convergence. A theoretical convergence guarantee is provided for the connection-aware P2P trading algorithm, which further details how smart selection strategies enhance convergence efficiency. Numerical studies are carried out to validate the effectiveness of the connection-aware algorithm and the performance of smart selection strategies in reducing the overall convergence time.
\end{abstract}



\begin{keyword}
P2P Energy Trading \sep Distributed Energy Resource \sep Trading Connection \sep Self-Selection Rights \sep Communication Asynchrony


\end{keyword}

\end{frontmatter}




\section{Introduction}
\label{introduction}

With the blooming of distributed energy resources (DERs) at the distribution grid, traditional electricity consumers are becoming proactive `prosumers', actively participating in peer-to-peer (P2P) trading with each other~\cite{TusharSaha-1}.  P2P trading can contain local power fluctuations, alleviate the burden of balancing, and provide flexibility services to the main grid~\cite{ShengWang-2}.

P2P trading typically involves several iterations for prosumers to finalize trading proposal agreements in a decentralized manner~\cite{FengLiang-3}. As the scale of P2P trading expands, a single prosumer might interact with numerous potential trading peers, significantly increasing the complexity of information exchange~\cite{MoretBaroche-6}. Communcation constraints limit simultaneous negotiations~\cite{UmerHuang-7}, while individual preferences lead to selective peer engagement~\cite{SousaSoares-5}. This requires asynchronous
P2P trading approaches. Current trading mechanisms, which are predominantly synchronous or overlook the self-selection process during trading, risk reducing welfare and impeding effective trading agreements~\cite{Nguyen-8}. 

Recent research highlights the importance of P2P trading in future distribution networks. Part of this research focused on a more detailed modeling of prosumers, considering their strategic behaviors~\cite{K.S.-13}, bounded rationality~\cite{Y.Q.-10}, and used learning techniques to help prosumers to perform trades~\cite{qiu2021scalable}. Others incorporated a more detailed network model into trading problems, employing sensitivity factors~\cite{GuerreroChapman-11} and DistFlow~\cite{M.J.-14}. These models are valuable references for P2P trading analysis. Meanwhile, the implementation and development of solutions to trading models is also of significant importance. Various information exchange structures are designed to facilitate P2P trading, including leader-follower~\cite{M.M.-17}, bilateral~\cite{J.Y.-15}, multi-layer~\cite{CuiWang-18}, and grand coalition schemes~\cite{LiYe-19}. This paper adopts a decentralized structure for its privacy protection and independence from a coordination entity~\cite{L.A.-20}. Notably, most of the trading algorithms mentioned above, regardless of the information structure, neglect connection limits and require strict synchronization between trading peers or between the prosumer and the coordinator.

\color{black}
Addressing connection issues in P2P trading is critical. One type of related research focused on communication connection formation problems in prosumer coalitions. Some studies \cite{TusharSaha-21} illustrated that the trading connections among prosumers are flexible and can be adjusted according to prosumers' preferences. One study proposes a graph coalition formation game \cite{TusharSaha-21}, while another suggests peer matching \cite{KhorasanyPaudel-1351}. These algorithms are mainly designed for P2P trading in a small community. They can be hard to be extended to handle large community because of their iterative complexity.  

Meanwhile, the connection issue is also a problem for non-coalition P2P trading. A few studies have investigated the non-coalition P2P trading connection problem. Ref.\cite{ChenYang-27} offered an overview of information-related issues in P2P trading, and Ref.\cite{DongBaroche-28} conducted extensive numerical experiments on this topic. Ref.\cite{UllahPark-25,Z.P.-29} pioneered the use of asynchronous alternating direction method of multipliers (ADMM) to counter random communication issues in P2P trading. Nevertheless, Ref.\cite{UllahPark-25} lacked a theoretical convergence guarantee for the constrained P2P trading problem. Ref.\cite{Z.P.-29} introduced complex consensus processes during trading, exacerbating connection issues. Some works \cite{LiuXu-32,LiLi-33} adopted an even-triggered version of the ADMM algorithm to address connection issues, named Communication-Sensored Consensus ADMM (COCA)~\cite{LiuXu-30}. In COCA, the connection peers are activated based on a diminishing threshold. Like all event-trigger algorithms, determining this threshold in real-world scenarios is challenging. Besides, the diminishing threshold should converge to zero, potentially activating almost all links over time.  Ref.\cite{umer2021novel} used the node coloring algorithm to design a communication-efficient P2P trading mechanism. However, these works all assumed random connection problems. In contrast, the prosumers are not just nodes in communication networks, they have inherently different utilities and preferences. Existing algorithms ignored prosumers' willingness and overlooked their self-selection rights. A complete of the existing research about communication issues in P2P trading is shown in Table \ref{tab:research}. 
\color{black}

\begin{table}[h]
\color{black}
\caption{A comparison of P2P trading solutions against communication issues across different algorithms.}
\label{tab:research}
\resizebox{\textwidth}{!}{%
\begin{tabular}{ccccccc}
\hline
\multicolumn{1}{l}{} &
  \multicolumn{1}{l}{} &
  \multicolumn{1}{c}{\begin{tabular}[c]{@{}c@{}}Handle\\ Asynchrony\end{tabular}} &
  \multicolumn{1}{c}{\begin{tabular}[c]{@{}c@{}}Handle\\ Self-selection\end{tabular}} &
  \multicolumn{1}{c}{\begin{tabular}[c]{@{}c@{}}Handle\\ Large Community\end{tabular}} &
  \multicolumn{1}{c}{\begin{tabular}[c]{@{}c@{}}Handle\\ Long-time Negotiation\end{tabular}} &
  \multicolumn{1}{c}{\begin{tabular}[c]{@{}c@{}}Theoretical\\ Guarantee\end{tabular}} \\ \hline
\multirow{2}{*}{\begin{tabular}[c]{@{}c@{}}Coalition\\ Game\end{tabular}}         & Graph Formation\cite{TusharSaha-21} & \checkmark  & \checkmark & $\times$  & $\times$ & $\times$\\
& Peer Matching\cite{KhorasanyPaudel-1351}     & \checkmark &\checkmark & $\times$ & $\times$ &\checkmark\\ \hline
\multirow{5}{*}{\begin{tabular}[c]{@{}c@{}}Competitive\\ Equilbrium\end{tabular}} & Random-trigger\cite{DongBaroche-28,UllahPark-25}&\checkmark &$\times$ &\checkmark &\checkmark &$\times$\\
& Threshold-trigger\cite{LiuXu-32,LiLi-33} &\checkmark &$\times$ &\checkmark &$\times$ &\checkmark\\
& Extra Consensus\cite{Z.P.-29} &\checkmark& $\times$ &$\times$ &\checkmark &$\times$\\
& Node Coloring\cite{umer2021novel}     &\checkmark& $\times$ &\checkmark &\checkmark &\checkmark \\
& Our Approach     &\checkmark &\checkmark &\checkmark &\checkmark &\checkmark\\ \hline
\end{tabular}%
}\color{black}
\end{table}

In summary, this paper makes the following contributions:
\begin{itemize}[leftmargin=8pt]
\item \textit{Develop the connection-aware P2P trading algorithm}  to address the asynchrony challenge in P2P trading, taking into account the trading connection limits. Besides, this algorithm enables prosumers to autonomously select their trading partners during negotiations, fostering a more dynamic and user-centric trading environment. 
\item \textit{Propose smart selection strategies} within the context of connection limits in P2P trading. The smart selection strategies not only accelerate convergence in asynchronous settings, but also protect the autonomy of prosumers' selection choices to enhance individual welfare.
\item \textit{Provide theoretical guarantees} on the convergence of the connection-aware P2P trading algorithm. Additionally, it elucidates how smart selection strategies contribute to accelerating the convergence of the negotiation process.
\end{itemize}

The remainder of this paper is organized as follows: Section \ref{sec:standard} outlines the P2P trading framework, formulates the problem of computing the P2P trading equilibrium, and discusses the standard primal-dual solution algorithm. Section \ref{sec:partial} introduces two variants of the asynchronous P2P trading algorithm - the edge-based and node-based algorithms - and elaborates on the smart selection strategies. Section \ref{sec:convergence} presents a convergence proof for both algorithms and demonstrates the efficacy of the smart selection strategies in accelerating convergence. Section \ref{sec:case} conducts case studies validating the effectiveness of the proposed methods. Section \ref{sec:conclusions} offers conclusions and suggests avenues for future research.

\textit{Notations:} Bold italic $\boldsymbol{x}$ denote column vectors. $\mathrm{col}(\boldsymbol{x}_1,\boldsymbol{x}_2,..,)$ denotes the stacked column vector formed by $\boldsymbol{x}_1,\boldsymbol{x}_2,...$. $\left<\boldsymbol{x}_1,\boldsymbol{x}_2\right>$ denotes the inner product of $\boldsymbol{x}_1$ and $\boldsymbol{x}_2$. $\|\boldsymbol{x}\|$ denotes the L2-norm. $\nabla$ is the gradient operator. 

\section{Standard P2P Trading Model and Algorithm} \label{sec:standard}
\subsection{Trading Setting and Competitive Equilibrium Computation}
The system comprises prosumers, denoted as $i=1,...,I$. Each prosumer $i$ possesses DERs, including load equipment, rooftop photovoltaic panels, electric vehicles, and energy storage systems. These are integrated with an energy management system to facilitate monitoring, P2P trading, and billing. In the context of P2P trading, ponsumers $i$ are represented as nodes $\mathcal{N}$, and potential trading connections are symbolized by $\mathcal{E}$, following the conventions of graph theory. This study considers the situation where P2P trading is carried out in the day-ahead market, where $\tau=1,...,\bar{\tau}$ periods are considered. For prosumer $i$, its individual decision variables can include load usage power $\boldsymbol{p}_{i}^{\mathrm{L}}=\mathrm{col}(p_{i,\tau}^{\mathrm{L}})$, storage/vehicle charging power $\boldsymbol{p}_{i}^{\mathrm{CH}}=\mathrm{col}(p_{i,\tau}^{\mathrm{CH}})$, storage/vehicle discharging power $\boldsymbol{p}_{i}^{\mathrm{DIS}}=\mathrm{col}(p_{i,\tau}^{\mathrm{DIS}})$, storage/vehicle state of charge $\boldsymbol{s}_{i}=\mathrm{col}(s_{i,\tau})$, and exchanging power with the main grid $\boldsymbol{p}_{i}^{\mathrm{EX}}=\mathrm{col}(p_{i,\tau}^{\mathrm{EX}})$. For brevity, they are denoted in the concatenated form $\boldsymbol{x}_i$. $\boldsymbol{x}_i= \mathrm{col}(\boldsymbol{p}_{i}^{\mathrm{L}},\boldsymbol{p}_{i}^{\mathrm{CH}},\boldsymbol{p}_{i}^{\mathrm{DIS}},\boldsymbol{s}_{i},\boldsymbol{p}_{i}^{\mathrm{EX}})$. The information is private to prosumer $i$. P2P trading transactions between prosumer $i$ and $j$ are denoted as $\boldsymbol{t}_{i,j}=\mathrm{col}({t}_{i,j,\tau})$ for prosumer $i$ and $\boldsymbol{t}_{j,i}=\mathrm{col}({t}_{j,i,\tau})$ for prosumer $j$, which should only be known by $i$ and $j$. All possible P2P trading transactions of prosumer $i$ are denoted as $\boldsymbol{t}_{i}=\mathrm{col}(\boldsymbol{t}_{i,j})_{j\in\mathcal{N}_i}$. \textcolor{black}{A pair refers to two prosumers, $i$ and $j$, who have potential trading intentions. There may be non-zero electricity trading between them, along with the possibility of P2P communication links.}

\textcolor{black}{This paper primarily focuses on fully decentralized P2P trading negotiations among prosumers. In real-world P2P trading systems, the communication topology may be hybrid, where groups of prosumers communicate with a centralized coordinator, while individual members within each group communicate in a P2P manner. In this scenario, the proposed method can be employed to assist with the communication process within each prosumer group.}

\textcolor{black}{Prosumers typically have the ability to engage in strategic behavior and exercise bargaining power in markets. However, in P2P trading, where there are numerous buyers and sellers dealing in identical electricity products, no individual prosumer can influence the market price. As a result, the outcome converges to a \textit{competitive equilibrium}~\cite{LiLian-34}, which is the optimal solution to the social cost minimization problem, represented as:}
\begin{align}
	\min_{\boldsymbol{x}_i,\boldsymbol{t}_i} \,\,&\sum\nolimits_{i\in \mathcal{N} }^{}{J_i\left( \boldsymbol{x}_i,\boldsymbol{t}_i \right)} \label{eq:obj}
	\\
	(\bf{P1})~~~~\mathrm{s}.\mathrm{t}.~~&\left[ \boldsymbol{x}_i,\boldsymbol{t}_i \right] \in \Omega _i,~\forall i\in \mathcal{N} \label{eq:feasible}
	\\
	&\boldsymbol{t}_{i,j}+\boldsymbol{t}_{j,i}=\boldsymbol{0}: \boldsymbol{\lambda}_{i,j}, ~\forall i\in \mathcal{N},j\in \mathcal{N}_i \label{eq:balance}
\end{align}
where the objective function \eqref{eq:obj} is to minimize the sum of individual costs $J_i\left( \boldsymbol{x}_i,\boldsymbol{t}_i \right)$, which further depends on $i$'s decision variables $\boldsymbol{x}_i$ and P2P transactions $\boldsymbol{t}_i$. The cost typically includes negative load usage utility, storage age cost, network fee, tax and preferences for P2P trading, etc. Constraint \eqref{eq:feasible} is the operation constraint for prosumer $i$, denoted by $\Omega _i$. The constraints can include load shifting constraints, energy storage charging status constraints, and power output constraints, etc. All the related information belongs to prosumer $i$'s privacy. Constraint \eqref{eq:balance} is the P2P trading power balance constraint: the power exported from $i$ to $j$ ($\boldsymbol{t}_{i,j}$) should be balanced by power from $j$ from $i$ ($\boldsymbol{t}_{j,i}$). The corresponding dual variable is $\boldsymbol{\lambda}_{i,j}$, represented by $\boldsymbol{\lambda}_{i,j}=\mathrm{col}(\lambda_{i,j,\tau})$ in different trading periods. The algorithms and solution methods can be used for day-ahead, real-time and rolling-based P2P trading situations.

To enhance the generalizability of the proposed method, we do not specify the concrete expressions of the objective \eqref{eq:obj} and the individual constraint \eqref{eq:feasible}. Instead, certain critical characteristics of the problem are assumed (P1):
\begin{assumption}\label{convex}
	The individual constraint $\Omega _i$ is a convex set. The individual objective function $J_i\left( \boldsymbol{x}_i,\boldsymbol{t}_i \right)$ is convex with respect to $\boldsymbol{x}_i$ and $m_i$-strongly convex with respect to $\boldsymbol{t}_i$. The constraints  \eqref{eq:feasible}-\eqref{eq:balance} have at least one interior point.
\end{assumption}
This convexity assumption is common in modeling prosumers and is crucial for ensuring the optimality of P2P trading solutions. The strongly convex assumption reflects the marginally increasing tax or network fee relative to the P2P trading volume, as discussed in other literature~\cite{Nguyen-8,LiuXu-32,Z.P.-29,FengLiang-3}. Typically, when the unit tax or network fee is proportional to the P2P trading volume, approximating tiered tax rates, the total network tax or fee becomes a quadratic function of trading power, which exhibits strong convexity. Defining $\boldsymbol{\lambda}_{i}=\mathrm{col}(\boldsymbol{\lambda}_{i,j})_{j\in\mathcal{N}_i}$, we present the following theorem:
\begin{theorem}\label{th:equilbrium}
	Under assumption \ref{convex}, $(\boldsymbol{x}_{i}^{\star} ,\boldsymbol{t}_{i}^{\star}, \boldsymbol{\lambda }_{i}^{\star})$ is a competitive equilibrium if and only if $( \boldsymbol{x}_{i}^{\star},\boldsymbol{t}_{i}^{\star} ,\boldsymbol{\lambda }_{i}^{\star})$ is the optimal solution to the problem (P1)~\cite{LiLian-34}.
\end{theorem}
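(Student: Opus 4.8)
The plan is to prove the equivalence through the Karush--Kuhn--Tucker (KKT) optimality conditions of (P1), exploiting the separable structure that appears once the coupling balance constraint \eqref{eq:balance} is dualized. First I would attach the multiplier $\boldsymbol{\lambda}_{i,j}$ to each balance constraint \eqref{eq:balance} and form the partial Lagrangian, relaxing only \eqref{eq:balance} while keeping each private feasible set $\Omega_i$ explicit. Regrouping every price term by the prosumer that owns the corresponding transaction variable, this Lagrangian separates across prosumers:
\[
\mathcal{L}(\boldsymbol{x},\boldsymbol{t},\boldsymbol{\lambda})=\sum_{i\in\mathcal{N}}\mathcal{L}_i,\qquad \mathcal{L}_i=J_i(\boldsymbol{x}_i,\boldsymbol{t}_i)+\sum_{j\in\mathcal{N}_i}\langle\boldsymbol{\lambda}_{i,j},\boldsymbol{t}_{i,j}\rangle.
\]
The key observation is that every term containing prosumer $i$'s private variables $(\boldsymbol{x}_i,\boldsymbol{t}_i)$ is collected into the single block $\mathcal{L}_i$, so minimizing $\mathcal{L}(\cdot,\boldsymbol{\lambda})$ over $\{(\boldsymbol{x}_i,\boldsymbol{t}_i)\in\Omega_i\}$ decomposes into $I$ independent problems, each of which is exactly the cost-minimization problem a price-taking prosumer $i$ solves when facing the bilateral trading prices $\boldsymbol{\lambda}_{i,j}$. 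This identifies the dual variable of the balance constraint with the clearing price of the corresponding pair and is the bridge between the centralized program and the decentralized equilibrium.

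With this in place, I would argue both directions. Because $\Omega_i$ is convex, $J_i$ is convex, and \eqref{eq:feasible}--\eqref{eq:balance} admit an interior point (Slater's condition) by Assumption \ref{convex}, strong duality holds and the KKT conditions are both necessary and sufficient for optimality of (P1). For the \emph{only if} direction, if $(\boldsymbol{x}_i^{\star},\boldsymbol{t}_i^{\star},\boldsymbol{\lambda}_i^{\star})$ solves (P1), then $(\boldsymbol{x}^{\star},\boldsymbol{t}^{\star})$ minimizes $\mathcal{L}(\cdot,\boldsymbol{\lambda}^{\star})$ over $\prod_i\Omega_i$ and primal feasibility \eqref{eq:balance} holds; by the separation, each $(\boldsymbol{x}_i^{\star},\boldsymbol{t}_i^{\star})$ minimizes $\mathcal{L}_i(\cdot,\boldsymbol{\lambda}^{\star})$ over $\Omega_i$, so every prosumer is individually optimal given prices $\boldsymbol{\lambda}^{\star}$, which together with market clearing is precisely a competitive equilibrium. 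For the \emph{if} direction, at a competitive equilibrium each $(\boldsymbol{x}_i^{\star},\boldsymbol{t}_i^{\star})$ minimizes $\mathcal{L}_i(\cdot,\boldsymbol{\lambda}^{\star})$ over $\Omega_i$ and the balance constraints hold; aggregating the per-prosumer variational conditions reconstitutes the full KKT system of (P1), and sufficiency of KKT under convexity then yields optimality.

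The main obstacle is the careful bookkeeping needed to show that the decentralized equilibrium conditions and the centralized KKT system are literally the same object: I must verify that the per-pair dual variable $\boldsymbol{\lambda}_{i,j}$ entering prosumer $i$'s problem as a price coincides with the multiplier produced by the centralized program, and that the per-block stationarity conditions aggregate exactly to the global condition with no leftover cross terms between $\boldsymbol{t}_{i,j}$ and $\boldsymbol{t}_{j,i}$. Since the feasible sets $\Omega_i$ are merely convex (and $J_i$ need not be smooth), I would state optimality of each block through a variational inequality over $\Omega_i$ rather than a plain gradient equation, which also handles minimizers on the boundary of $\Omega_i$ cleanly. I note that the $m_i$-strong convexity of $J_i$ in $\boldsymbol{t}_i$ is not required for the equivalence itself but guarantees uniqueness of the equilibrium transactions $\boldsymbol{t}_i^{\star}$, while the interior-point hypothesis is the load-bearing assumption that licenses both the necessity of KKT and strong duality.
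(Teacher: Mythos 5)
Your proof is correct, but there is an important contextual fact: the paper itself offers \emph{no} proof of Theorem~\ref{th:equilbrium}. It states the equivalence as a known result and defers entirely to the citation~\cite{LiLian-34}; the appendix only proves Corollary~\ref{th:Lipschitz} and Proposition~\ref{prop:node}. So there is no in-paper argument to compare against, and what you have written is essentially the standard argument that such references supply: dualize the balance constraint \eqref{eq:balance}, observe that the partial Lagrangian separates into per-prosumer blocks $\mathcal{L}_i = J_i(\boldsymbol{x}_i,\boldsymbol{t}_i)+\sum_{j\in\mathcal{N}_i}\langle\boldsymbol{\lambda}_{i,j},\boldsymbol{t}_{i,j}\rangle$ (exactly the inner minimand appearing in the paper's dual decomposition \eqref{eq:dual}), identify block-wise minimization with price-taking individual optimality, and close both directions via Slater/strong duality and convex sufficiency of KKT. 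Your side remarks are also on point: the interior-point hypothesis in Assumption~\ref{convex} is indeed the load-bearing one, and $m_i$-strong convexity is needed only for uniqueness (the paper uses it later, for the Lipschitz dual gradients in Corollary~\ref{th:Lipschitz}, not for this theorem). The one piece of bookkeeping you flag but do not fully discharge is the price-symmetry convention: the coupling term $\langle\boldsymbol{\lambda}_{i,j},\boldsymbol{t}_{i,j}+\boldsymbol{t}_{j,i}\rangle$ distributes into the two blocks $\mathcal{L}_i$ and $\mathcal{L}_j$ only if one sets $\boldsymbol{\lambda}_{j,i}=\boldsymbol{\lambda}_{i,j}$ (one multiplier per unordered pair, seen identically by both endpoints), which is implicitly what the paper's notation assumes; stating this explicitly would make the ``no leftover cross terms'' claim airtight, and it is also the economically meaningful point that both counterparties face the same clearing price.
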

Theorem \ref{th:equilbrium} indicates that the P2P trading price can be viewed as the Lagrange multiplier of the trading balance constraint in problem (P1). Problem (P1) can also integrate network constraints, such as distribution power flow and voltage magnitude, into the optimization framework. For simplicity, this discussion focuses on the trading power balance constraint as a representative example.

\subsection{Primal-dual Algorithm for P2P trading}
\begin{algorithm}[t] 
	\caption{Standard P2P Trading}\label{al:standard}
	\SetAlgoLined
	\KwInput {Initial values $\boldsymbol{\lambda}_{i,j}^{\left( 0 \right)}  ,\boldsymbol{t}_{i}^{\left( 0 \right)}$; stopping criterion $\varepsilon_1,\varepsilon_2$; $k=1$}
	\While {$\|\Delta\boldsymbol{x}_i;\Delta\boldsymbol{t}_i\| \geqslant \varepsilon_1$ \textbf{or} $ \| \Delta \boldsymbol{\lambda }_{i,j}\| \geqslant \varepsilon_2$ }{
		\textbf{Step~1:} Each prosumer updates trade proposals  
		\begin{equation}
			\begin{aligned}
				&\boldsymbol{x}_{i}^{\left( k \right)},\boldsymbol{t}_{i}^{\left( k \right)}=
				\\
				&\mathrm{arg} \underset{\left[ \boldsymbol{x}_i,\boldsymbol{t}_i \right] \in \Omega _i}{\min}J_i\left( \boldsymbol{x}_i,\boldsymbol{t}_i \right) +\sum_{j\in \mathcal{N} _i}{\left< \boldsymbol{\lambda }_{i,j}^{\left( k-1 \right)},\boldsymbol{t}_{i,j} \right>}
			\end{aligned}
		\end{equation} 
		and transmit new proposals to \textit{all} trading peers.\\
		\textbf{Step~2:} Each prosumer receives new proposals and updates dual prices
		\begin{equation} \label{eq:standard_dual}
			\boldsymbol{\lambda }_{i,j}^{\left( k \right)}=\boldsymbol{\lambda }_{i,j}^{\left( k-1 \right)}+\rho \left( \boldsymbol{t}_{i,j}^{\left( k \right)}+\boldsymbol{t}_{j,i}^{\left( k \right)} \right) 
		\end{equation}		
		and sets $k\gets k+1$.   
	}
	\KwResult {$\boldsymbol{t}_i^{\left( k \right)},\boldsymbol{\lambda }_{i,j}^{\left( k \right)}$}
\end{algorithm}
Problem (P1) is typically resolved iteratively, often through primal-dual iterations, as detailed in Algorithm \ref{al:standard}. This iterative approach employs a subgradient algorithm on the dual function, converging to the optimal trading volume and pricing pair. The algorithm exhibits several key market properties:
\begin{itemize}[leftmargin=8pt]
	\item \textit{Welfare Maximization}: Social welfare is maximized at this equilibrium.
	\item \textit{Individual Rationality}: Every prosumer is better off by participating in P2P trading.
	\item \textit{Privacy Protection}: Each prosumer peer exchanges only information related to their mutual trade proposals.
\end{itemize}
\textcolor{black}{By privacy protection, we mean for prosumer pair $i$, he or she will share trading transactions $\boldsymbol{t}_{i,j}$ for potential trading partners $j$. It represents how much electricity will be traded between $i$ and $j$ according to the willingness of prosumer $i$.} Moreover, the primal-dual algorithm can be interpreted as prosumers engaging in a \textit{potential game}, where the potential function aligns with the objective function \eqref{eq:obj}. In certain scenarios, the individual cost function $J_i(\cdot)$ depends not only on $\boldsymbol{x}_i$ but also on the collective decisions of all prosumers $\boldsymbol{x}$. For example, the wholesale energy purchasing price may be influenced by the cumulative power import/ export of all prosumers. It makes the problem a \textit{Generalized Nash Equilibrium (GNE) seeking} problem~\cite{WangLiu-35}. Ensuring convergence in GNE seeking requires more stringent assumptions about problem (P1) and, in some cases, necessitates the pre-determination of P2P trading prices~\cite{AnandutaGrammatico-36}. 

\section{Connection-aware P2P Trading and Peer Selection} \label{sec:partial}
\begin{figure*}[t]
	\centering
	\includegraphics[width=0.8\textwidth]{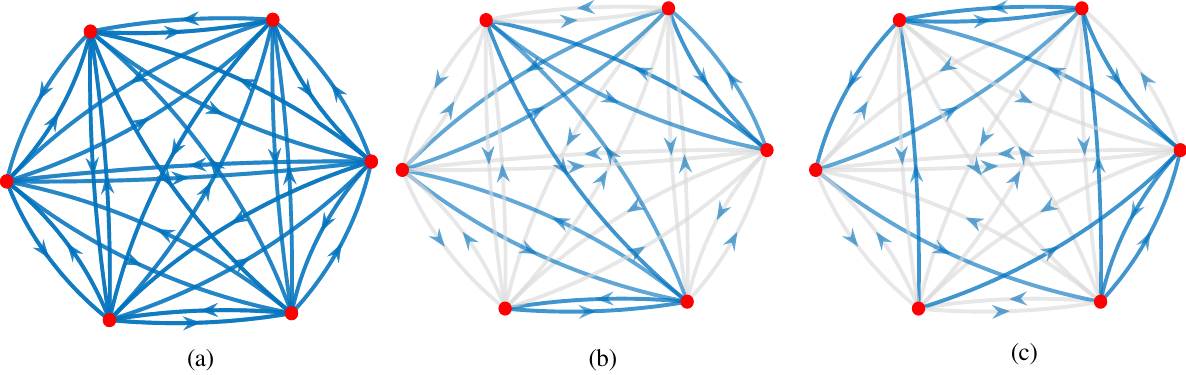}
	\caption{The compassion among communication topology for the three algorithms for 6 peers. The nodes represent prosumers. The colored edges represent activated trading connections, and the gray ones represent inactive edges. The arrow direction means message push direction. (a) Standard Algorithm requires strict push-receive synchronisation among all potential peers. (b) Edge-based Algorithm enables partial connections among all potential trading peers but still needs push-receive synchronisation (6 pair of edges are selected). (c) Node-based Algorithm enables both partial connections and de-synchronised push and response. Besides, prosumers can select prosumers to communication (2 connection limit for each prosumer).}
	\label{fig:intro}
\end{figure*}
Algorithm \ref{al:standard} has two primary shortcomings:
\begin{itemize}[leftmargin=8pt]
	\item \textit{Strict Synchronization}: Algorithm \ref{al:standard} requires stringent synchronicity among prosumers for updating trade proposals. Specifically, each prosumer $i$ must push updated trade proposals $\boldsymbol{t}_{i,j}^{(k)}$ to all potential trade peers $j\in \mathcal{N}i$ and concurrently await incoming proposals $\boldsymbol{t}_{j,i}^{(k)}$ from these peers. This is extremely difficult for realistic P2P trading. 
	\item \textit{Fixed and Compulsory Connection}: The algorithm pre-supposes a static trading network and every prosumer should always engage in mutual updates compulsorily. Contrary to this assumption, trading connections in real-world scenarios are often dynamically interconnected. Additionally, prosumers often have strong autonomy in selecting trading partners, necessitating a more flexible approach to modelling these connections. 
\end{itemize}
To address the aforementioned challenges, this section proposes two algorithms: edge-based connection-aware P2P trading and node-based connection-aware P2P trading. They aim to enhance the adaptability and efficiency of P2P trading by accommodating dynamic trading connections and reducing the dependency on strict synchronization among prosumers. Comparisons among the communication topology for the three algorithms are exhibited in Fig.\ref{fig:intro}.
\subsection{Edge-based connection-aware P2P Trading}
\begin{algorithm}[t] 
	\caption{Edge-based connection-aware P2P Trading}\label{al:partial_edge}
	\SetAlgoLined
	\KwInput {Initial values $\boldsymbol{\lambda}_{i,j}^{\left( 0 \right)}  ,\boldsymbol{t}_{i}^{\left( 0 \right)}$; stopping criterion $\varepsilon_1,\varepsilon_2$; $k=1$}
	\While {$\|\Delta\boldsymbol{x}_i;\Delta\boldsymbol{t}_i\| \geqslant \varepsilon_1$ \textbf{or} $ \| \Delta \boldsymbol{\lambda }_{i,j}\| \geqslant \varepsilon_2$ }{
		\textbf{Step~1:} Prosumers compute new trade proposals  
		\begin{align}
			&\boldsymbol{x}_{i}^{\left( k \right)},\boldsymbol{t}_{i}^{\left( k \right)}= 
			\\
			&\mathrm{arg} \underset{\left[ \boldsymbol{x}_i,\boldsymbol{t}_i \right] \in \Omega _i}{\min}J_i\left( \boldsymbol{x}_i,\boldsymbol{t}_i \right) +\sum_{j\in \mathcal{N} _i}{\left< \boldsymbol{\lambda }_{i,j}^{\left( k-1 \right)},\boldsymbol{t}_{i,j} \right>}\nonumber
		\end{align}\\
		\textbf{Step~2:} A sensory oracle determines trading peers $(i,j)\in\mathcal{E}^{(k)}$ to push proposal updates.\\
		\textbf{Step~3:} Each prosumer receives new proposals and updates dual prices if are selected:
		\begin{equation} \label{eq:edge_dual}
			\boldsymbol{\lambda }_{i,j}^{\left( k \right)}=\begin{cases}
				\boldsymbol{\lambda }_{i,j}^{\left( k-1 \right)}+\rho \left( \boldsymbol{t}_{i,j}^{\left( k \right)}+\boldsymbol{t}_{j,i}^{\left( k \right)} \right) , \left( i,j \right) \in \mathcal{E} ^{\left( k \right)}\\
				\boldsymbol{\lambda }_{i,j}^{\left( k-1 \right)}, \left( i,j \right) \notin \mathcal{E} ^{\left( k \right)}\\
			\end{cases}
		\end{equation}		
		Set $k\gets k+1$.   
	}
	\KwResult {$\boldsymbol{t}_i^{\left( k \right)},\boldsymbol{\lambda }_{i,j}^{\left( k \right)}$}
\end{algorithm}
The discussion begins with the edge-based connection-aware trading algorithm. Although it is more idealistic compared to the node-based connection-aware P2P trading algorithm, it remains crucial for understanding the primal-dual iteration process.

\begin{figure*}[t]
\centering
\includegraphics[width=0.7\textwidth]{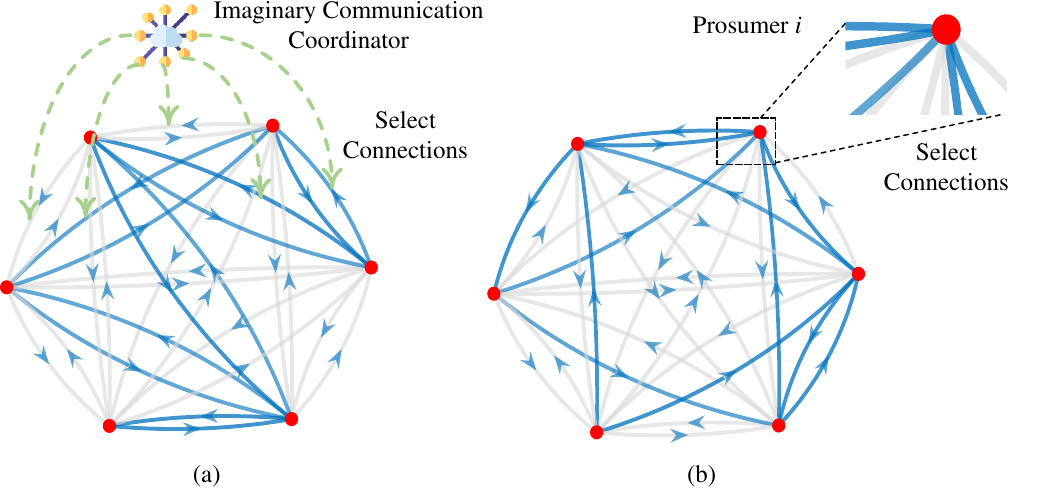}
\caption{(a) In Edge-based Algorithm, the sensory oracle selects which bilateral connections to be activated. (b) In Node-based Algorithm, each prosumer decides which potential trading partners to communicate with.}
\label{fig:oracle}
\end{figure*}

\color{black}
Algorithm \ref{al:partial_edge} represents a significant improvement over the standard P2P trading algorithm. A graphical illustration about the algorithm are illustrated in Fig.\ref{fig:oracle} (a). The key improvement of the algorithm lies in its relaxed requirement for updates from all trading peers. Specifically, in Step 2, a component referred to as the "imaginary communication coordinator" plays a crucial role. This coordinator is responsible for monitoring and controlling the P2P communication links, determining which trading connections $(i,j)\in\mathcal{E}^{(k)}$ should be activated during a given iteration $k$, thereby guiding the P2P proposal updates accordingly. For instance, the coordinator might use a random selection policy to decide which P2P communication link will be activated. This scenario reflects the real, time-varying communication network that dictates the feasible connections at any given moment. 

In Step 3, only the peers selected by the oracle are required to update their P2P trading prices. This approach introduces a form of semi-synchronization, contrasting with the Algorithm 1, which requires updates from all trading peers in every round. However, it maintains the principle that every message sent from the consumer $i$ to $j$ needs a corresponding response from $j$ to $i$, ensuring a closed communication loop. 
\color{black}

In this section, a simple illustration is provided to explain the effectiveness of Algorithm 2. The overall dual variables are defined as $\boldsymbol{\lambda}=\mathrm{col}({\boldsymbol{\lambda}_{i,j}})$. By the principle of strong duality for problem (P1), the following series of equations are derived:
\begin{align}
	&\min_{\boldsymbol{t}_{i,j}+\boldsymbol{t}_{j,i}=\boldsymbol{0},\left[ \boldsymbol{x}_i,\boldsymbol{t}_i \right] \in \Omega _i} \,\,\sum_i^{}{J_i\left( \boldsymbol{x}_i,\boldsymbol{t}_i \right)}\nonumber
	\\
	=&\underset{\boldsymbol{\lambda }}{\max}\min_{\left[ \boldsymbol{x}_i,\boldsymbol{t}_i \right] \in \Omega _i} \sum_{i}^{}{( J_i\left( \boldsymbol{x}_i,\boldsymbol{t}_i \right) +\sum_{j\in \mathcal{N} _i}{\left< \boldsymbol{\lambda }_{i,j},\boldsymbol{t}_{i,j} \right>} )}\label{eq:dual}
	\\
	=&-\min\nolimits_{\boldsymbol{\lambda }}\sum\nolimits_{i}^{}{-D_i\left( \boldsymbol{\lambda }_i \right)}=-\min\nolimits_{\boldsymbol{\lambda }}-D\left( \boldsymbol{\lambda } \right) \nonumber
\end{align}
where the individual dual function $D_i\left( \boldsymbol{\lambda }_i \right)$ and the summed dual function $D\left( \boldsymbol{\lambda } \right)$ is denoted as:
\begin{align}
	&D_i\left( \boldsymbol{\lambda }_i \right) =\min_{\left[ \boldsymbol{x}_i,\boldsymbol{t}_i \right] \in \Omega _i} J_i\left( \boldsymbol{x}_i,\boldsymbol{t}_i \right) +\sum\nolimits_{j\in \mathcal{N} _i}{\left< \boldsymbol{\lambda }_{i,j},\boldsymbol{t}_{i,j} \right>}
	\\
	&D\left( \boldsymbol{\lambda } \right) =\sum\nolimits_i{D_i\left( \boldsymbol{\lambda }_i \right)}
\end{align}
Meanwhile, the gradient of dual functions can be written as:
\begin{align} 
	&\nabla _{\boldsymbol{\lambda }_{i,j}}D_i(\boldsymbol{\lambda }_{i}^{\left( k-1 \right)})=\boldsymbol{t}_{i,j}^{\left( k \right)}
	\\
	&\nabla _{\boldsymbol{\lambda }_{i,j}}D(\boldsymbol{\lambda }^{\left( k-1 \right)})=\sum_i{\nabla _{\boldsymbol{\lambda }_{i,j}}D_i(\boldsymbol{\lambda }_{i}^{\left( k-1 \right)})}=\boldsymbol{t}_{i,j}^{\left( k \right)}+\boldsymbol{t}_{j,i}^{\left( k \right)}\label{eq:gradient}
\end{align}

These equations demonstrate that solving (P1) equates to finding the minimum of an unconstrained function $-D\left( \boldsymbol{\lambda } \right)$. The gradient descent method to achieve this is as follows:
\begin{align}
	\boldsymbol{\lambda }^{\left( k \right)}=\boldsymbol{\lambda }^{\left( k-1 \right)}+\rho \nabla D( \boldsymbol{\lambda }^{\left( k-1 \right)} ) 
\end{align}
This process corresponds to \eqref{eq:standard_dual} in Step 2 of Algorithm \ref{al:standard}. In contrast, the update rule \eqref{eq:edge_dual} in Step 2 of Algorithm \ref{al:partial_edge} implements a `coordinate descent' approach on the dual function of problem (P1)~\cite{HongWang-38}. This method involves selecting partial coordinates of $\boldsymbol{\lambda}$ for optimization, rather than following the full gradient direction. By ensuring that the negative dual function  $-D\left( \boldsymbol{\lambda } \right)$ consistently decreases with each iteration, convergence to the optimal value of (P1) is guaranteed. 

\subsection{Node-based Connection-Aware P2P Trading}
\begin{algorithm}[t] 
	\caption{Node-based Connection-Aware P2P Trading}\label{al:partial}
	\SetAlgoLined
	\KwInput {Initial values $\boldsymbol{\lambda}_{i,j}^{\left( 0 \right)}  ,\boldsymbol{t}_{i}^{\left( 0 \right)}$; stopping criterion $\varepsilon_1,\varepsilon_2$; $k=1$}
	\While {$\|\Delta\boldsymbol{x}_i;\Delta\boldsymbol{t}_i\| \geqslant \varepsilon_1$ \textbf{or} $ \| \Delta \boldsymbol{\lambda }_{i,j}\| \geqslant \varepsilon_2$ }{
		\textbf{Step~1:} Prosumers compute new trade proposals  
		\begin{align}
			&\boldsymbol{x}_{i}^{\left( k \right)},\boldsymbol{t}_{i}^{\left( k \right)}= 
			\\
			&\mathrm{arg} \underset{\left[ \boldsymbol{x}_i,\boldsymbol{t}_i \right] \in \Omega _i}{\min}J_i\left( \boldsymbol{x}_i,\boldsymbol{t}_i \right) +\sum_{j\in \mathcal{N} _i}{\left< \boldsymbol{\lambda }_{i,j}^{\left( k-1 \right)},\boldsymbol{t}_{i,j} \right>}\nonumber
		\end{align}\\
		\textbf{Step~2:} Prosumer $i$ selects the trading a subset of trading peers whom $i$ wants to communicate with, denoted by $j\in \mathcal{N} _i^{(k)}$, and pushes new trade proposals to $j\in\mathcal{N} _i^{(k)}$.\\
		\textbf{Step~3:} Each prosumer waits, receives new proposals, and records them as:
		\begin{equation}\label{eq:new_update}
			\tilde{\boldsymbol{t}}_{i,j}^{\left( k \right)}=\begin{dcases}
				\boldsymbol{t}_{i,j}^{\left( k \right)}, j\in \mathcal{N} _{i}^{(k)}\\
				\tilde{\boldsymbol{t}}_{i,j}^{\left( k-1 \right)}, j\notin \mathcal{N} _{i}^{(k)}\end{dcases},
			\tilde{\boldsymbol{t}}_{j,i}^{\left( k \right)}=\begin{cases}
				\boldsymbol{t}_{j,i}^{\left( k \right)}, i\in \mathcal{N} _{j}^{(k)}\\
				\tilde{\boldsymbol{t}}_{j,i}^{\left( k-1 \right)}, i\notin \mathcal{N} _{j}^{(k)}
			\end{cases}
		\end{equation}		 
		and update dual prices as:
		\begin{equation}\label{eq:new_dual}
			\boldsymbol{\lambda }_{i,j}^{\left( k \right)}=\begin{cases}
				\boldsymbol{\lambda }_{i,j}^{\left( k-1 \right)}+\rho \left( \tilde{\boldsymbol{t}}_{i,j}^{\left( k \right)}+\tilde{\boldsymbol{t}}_{j,i}^{\left( k \right)} \right) , i\,\mathrm{or}\, j\,\,\mathrm{updates}\\
				\boldsymbol{\lambda }_{i,j}^{\left( k-1 \right)}, \mathrm{otherwise}
			\end{cases}
		\end{equation}		
		and sets $k\gets k+1$.   
	}
	\KwResult {$\boldsymbol{t}_i^{\left( k \right)},\boldsymbol{\lambda }_{i,j}^{\left( k \right)}$}
\end{algorithm}
The node-based connection-aware P2P trading algorithm is shown in Algorithm \ref{al:partial}. It requires much less synchrony and grants prosumers more freedom, which is more realistic in real trading situations. A graphical illustration about the algorithm are illustrated in Fig.\ref{fig:oracle} (b). 

In Step 2, a prosumer $i$ selectively communicates with a subset of trading peers, denoted as $j\in \mathcal{N}_i^{(k)}$, to propose new trade agreements. This selective approach enables prosumer $i$ to actively manage the scope of their trading proposals. In practical implementations, this means prosumer $i$ can strategically choose the number of peers to communicate with, ensuring the timely transmission of information within the stipulated iteration deadline. This design allows for a broader network of potential trading partners, with negotiations limited to a manageable subset, chosen based on the prosumer’s preferences. Step 3 introduces a further level of flexibility by removing the obligation for prosumer $i$ to receive proposal updates from the selected peers $j\in \mathcal{N}_i^{(k)}$. The process of sending and receiving information is effectively de-synchronized. As outlined in Eq.\eqref{eq:new_update}, the trading proposal updates between prosumers $i$ and $j$ can occur in four scenarios: both parties propose new trades, only $i$ proposes, only $j$ proposes, or neither proposes. Importantly, as specified in Eq.\eqref{eq:new_dual}, the dual prices are updated whenever either prosumer, $i$ or $j$, initiates a new trade proposal, ensuring dynamic and responsive adjustments. 

\color{black}
Table \ref{tab:algorithm} compares the communication details of the original P2P trading algorithm with the proposed edge-based and node-based algorithms. Among these, the node-based algorithm is the most promising for P2P trading. It eliminates the need for bi-directional synchronization and allows prosumers to choose their communication links based on their preferences to trading partners.
\color{black}

\begin{table}[]
\color{black}
\centering
\caption{A comparison of communication details across different algorithms.}
\label{tab:algorithm}
\resizebox{0.8\textwidth}{!}{%
\begin{tabular}{ccccc}
\hline
\multicolumn{1}{l}{} & Shared Data                                                                       & \begin{tabular}[c]{@{}c@{}}Communication\\ Links\end{tabular} & \begin{tabular}[c]{@{}c@{}}Bi-directional\\ or Uni-directional\end{tabular} & \begin{tabular}[c]{@{}c@{}}Communication Link \\ Selection Method\end{tabular}        \\ \hline
Original Algorithm   & \multirow{3}{*}{\begin{tabular}[c]{@{}c@{}}P2P\\ transaction data\end{tabular}} & All links                                                     & Bi-directional                                                              & No selection                                                                          \\
Edge-based Algorithm &                                                                                   & Partial links                                                 & Bi-directional                                                              & \begin{tabular}[c]{@{}c@{}}By a imaginary \\ communication coordinator\end{tabular} \\
Node-based Algorithm &                                                                                   & Partial links                                                 & Uni-directional                                                             & By prosumers themselves                                                               \\ \hline
\end{tabular}%
}
\color{black}

\end{table}

\color{black}
In real-world scenarios, communication latency between prosumer pairs can fluctuate over time. Here, we analyze the impact of latency in practical implementations.

The original trading algorithm requires strict synchronization among prosumers. Each prosumer $i$ must first send messages to its neighbors (incurring sending latency) and then wait to receive information from every neighbor (incurring receiving latency). In large communities, where each prosumer may have numerous trading partners, the likelihood of encountering abnormal communication latency among neighbours increases. As a result, the total sending and receiving latency can become quite significant.

In the edge-based algorithm, the number of prosumers required to send messages is reduced, thereby decreasing the probability of encountering extreme communication latency. However, waiting deadlocks can still occur.

In the node-based algorithm, prosumers are no longer required to wait indefinitely for their neighbors' information. They can set a waiting deadline (e.g., 5 seconds), after which any incoming information is disregarded. This approach further reduces latency during the waiting phase.
\color{black}

\subsection{Smart Trading Peer Selection Strategy}
Selecting the appropriate trading peers for updates is a pivotal step in both Algorithm \ref{al:partial_edge} and Algorithm \ref{al:partial}. This selection process greatly influences the efficacy and speed of the iteration process. While conventional asynchronous algorithms typically resort to random peer selection, based on predefined probability distributions, a smarter approach is proposed that aligns with individual prosumer interests, and enhancing the convergence rate of the trading process.
\begin{itemize}[leftmargin=8pt]
	\item \textit{Smart Selection Strategy For Edge-based Algorithm}: For edge-based Algorithm \ref{al:partial_edge}, if the total number of activated trading connections is $\bar{e}$, the optimal strategy to accelerate convergence involves selecting the top-$e$ peers that yield the largest trading imbalances as: 
	\begin{equation}\label{eq:edge_selection}
		\underset{e_{i,j}=\left\{ 0,1 \right\} ,\sum\nolimits_{i,j}^{}{e_{i,j}=\bar{e}}}{\max}\,\,\sum_{i,j}{e_{i,j}\left\| \boldsymbol{t}_{i,j}^{\left( k \right)}+\boldsymbol{t}_{j,i}^{\left( k \right)} \right\|}
	\end{equation}
	\item \textit{Smart Selection Strategy For Node-based Algorithm}: For node-based Algorithm \ref{al:partial}, if the total number pieces of pushed information is $\bar{e}_i$ for $i$, the optimal strategy to accelerate convergence is to select the top-$\bar{e}_i$ peers that render the largest trading imbalance based on $i$'s information as:
	\begin{equation}\label{eq:node_selection}
		\underset{e_{i,j}=\left\{ 0,1 \right\} ,\sum\nolimits_i^{}{e_{i,j}=\bar{e}}}{\max}\,\,\sum_i{\sum_{j\in \mathcal{N} _i}{e_{i,j}\left\| \boldsymbol{t}_{i,j}^{\left( k \right)}+\tilde{\boldsymbol{t}}_{j,i}^{\left( k-1 \right)} \right\|}}
	\end{equation}
\end{itemize}

The economical intuitions behind this smart strategy lie in prioritizing peers with whom there are significant differences in trading proposals. These peers are more likely to influence the P2P trading price and, consequently, the cost for prosumer $i$. By focusing negotiation efforts on these potentially more impactful peers, prosumer $i$ can effectively accelerate the trading process. 

The two smart selection strategies presented above only requires a top-$e$ sorting process, whose complexity is far less than the solution process of prosumers' individual optimization problem. Thereby, it will not add extra complexity to the asychronous P2P trading process. 

In the following section, we aim to mathematically explain why these smart strategies can improve the convergence process. This proof will provide a solid theoretical foundation for the practical efficacy of the smart peer selection strategy. It is important to note, in the context of the real edge-based algorithm, that the ideal scenario of a sensory oracle selecting trading peers based on trading quantities does not exist in reality. Meanwhile, the node-based connection-aware algorithm is more realistic. 

%

\section{Convergence Analysis} \label{sec:convergence}
To begin, we examine certain properties of dual functions, which are fundamental to the convergence analysis. These properties, widely acknowledged in various studies (\cite{BianchiAnanduta-24,Y.Z.-37}), are as follows:
\begin{corollary}\label{th:Lipschitz}
	The individual dual function $D_i\left( \boldsymbol{\lambda }_i \right) $ has a $\frac{1}{m_i}$-Lipschitz gradient and the summed dual function $D\left( \boldsymbol{\lambda } \right) $ has $\sum_{i}{ \frac{1}{m_i} } $-Lipschitz gradient. Namely. for any two dual variables $\boldsymbol{\lambda}$ and $\boldsymbol{\mu}$, the following inequalities hold:
	\begin{align}
		&\frac{1}{m_i}\left\| \boldsymbol{\lambda }_i-\boldsymbol{\mu }_i \right\| \geqslant \left\| \nabla D_i\left( \boldsymbol{\lambda }_i \right) -\nabla D_i\left( \boldsymbol{\mu }_i \right) \right\| 
		\\
		&( \sum_{i}{ \frac{1}{m_i} } ) \left\| \boldsymbol{\lambda }-\boldsymbol{\mu } \right\| \geqslant \left\| \nabla D\left( \boldsymbol{\lambda } \right) -\nabla D\left( \boldsymbol{\mu } \right) \right\| \,\,             	
	\end{align}
\end{corollary}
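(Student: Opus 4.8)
The plan is to reduce the claim to the standard duality between strong convexity of a primal objective and Lipschitz continuity of the gradient of its dual. The excerpt already supplies the crucial gradient formula $\nabla_{\boldsymbol{\lambda}_{i,j}} D_i(\boldsymbol{\lambda}_i) = \boldsymbol{t}_{i,j}^{\star}$, i.e. $\nabla_{\boldsymbol{\lambda}_i} D_i(\boldsymbol{\lambda}_i)$ equals the trade part $\boldsymbol{t}_i^{\star}(\boldsymbol{\lambda}_i)$ of the inner minimizer, which is unique by the $m_i$-strong convexity of $J_i$ in $\boldsymbol{t}_i$. Hence proving the first inequality amounts to showing that the solution map $\boldsymbol{\lambda}_i \mapsto \boldsymbol{t}_i^{\star}(\boldsymbol{\lambda}_i)$ is $\tfrac{1}{m_i}$-Lipschitz.

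First I would fix two price vectors $\boldsymbol{\lambda}_i,\boldsymbol{\mu}_i$ and denote the corresponding inner minimizers over $\Omega_i$ by $(\boldsymbol{x}_i^{a},\boldsymbol{t}_i^{a})$ and $(\boldsymbol{x}_i^{b},\boldsymbol{t}_i^{b})$. Writing $F_i^{\boldsymbol{\lambda}_i}(\boldsymbol{x}_i,\boldsymbol{t}_i) = J_i(\boldsymbol{x}_i,\boldsymbol{t}_i) + \langle\boldsymbol{\lambda}_i,\boldsymbol{t}_i\rangle$, the linear price term does not affect curvature, so $F_i^{\boldsymbol{\lambda}_i}$ inherits $m_i$-strong convexity in $\boldsymbol{t}_i$. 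The optimality-of-minimizer inequality for a convex objective that is $m_i$-strongly convex in the $\boldsymbol{t}$ block gives $F_i^{\boldsymbol{\lambda}_i}(\boldsymbol{x}_i^{b},\boldsymbol{t}_i^{b}) \geq F_i^{\boldsymbol{\lambda}_i}(\boldsymbol{x}_i^{a},\boldsymbol{t}_i^{a}) + \tfrac{m_i}{2}\|\boldsymbol{t}_i^{a}-\boldsymbol{t}_i^{b}\|^2$, and symmetrically for $\boldsymbol{\mu}_i$. Adding the two inequalities cancels the common $J_i$ terms and leaves the strong-monotonicity statement $\langle\boldsymbol{\mu}_i-\boldsymbol{\lambda}_i,\,\boldsymbol{t}_i^{a}-\boldsymbol{t}_i^{b}\rangle \geq m_i\|\boldsymbol{t}_i^{a}-\boldsymbol{t}_i^{b}\|^2$. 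Applying Cauchy--Schwarz to the left side and dividing by $\|\boldsymbol{t}_i^{a}-\boldsymbol{t}_i^{b}\|$ yields $\|\boldsymbol{t}_i^{a}-\boldsymbol{t}_i^{b}\| \leq \tfrac{1}{m_i}\|\boldsymbol{\lambda}_i-\boldsymbol{\mu}_i\|$, which is exactly the first inequality once we identify $\boldsymbol{t}_i^{a},\boldsymbol{t}_i^{b}$ with $\nabla D_i(\boldsymbol{\lambda}_i),\nabla D_i(\boldsymbol{\mu}_i)$.

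For the summed function I would lift each $D_i$ to the full price space by setting $\hat D_i(\boldsymbol{\lambda}) := D_i(\boldsymbol{\lambda}_i)$, so that $D = \sum_i \hat D_i$. Because $\boldsymbol{\lambda}_i$ is a sub-vector of $\boldsymbol{\lambda}$ (the edge-price $\boldsymbol{\lambda}_{i,j}$ is shared by exactly the two incident subproblems $i$ and $j$, as reflected in \eqref{eq:gradient}), the embedded gradient $\nabla\hat D_i$ has the same norm as $\nabla D_i$ while $\|\boldsymbol{\lambda}_i-\boldsymbol{\mu}_i\| \leq \|\boldsymbol{\lambda}-\boldsymbol{\mu}\|$; hence each $\hat D_i$ has a $\tfrac{1}{m_i}$-Lipschitz gradient on the full space. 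Then $\|\nabla D(\boldsymbol{\lambda})-\nabla D(\boldsymbol{\mu})\| \leq \sum_i \|\nabla\hat D_i(\boldsymbol{\lambda})-\nabla\hat D_i(\boldsymbol{\mu})\| \leq (\sum_i \tfrac{1}{m_i})\|\boldsymbol{\lambda}-\boldsymbol{\mu}\|$ by the triangle inequality, giving the second inequality.

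I expect the main obstacle to be justifying the strong-convexity optimality inequality rigorously, since Assumption \ref{convex} states convexity in $\boldsymbol{x}_i$ and strong convexity in $\boldsymbol{t}_i$ only separately: the argument genuinely needs \emph{joint} convexity of $J_i$, so that $J_i - \tfrac{m_i}{2}\|\boldsymbol{t}_i\|^2$ is jointly convex and the partial minimization over $\boldsymbol{x}_i$ does not destroy the curvature in the $\boldsymbol{t}_i$ block. I would therefore first make explicit that $J_i$ carries an additive $\tfrac{m_i}{2}\|\boldsymbol{t}_i\|^2$ of joint curvature, and then combine the first-order optimality condition on the convex set $\Omega_i$ with the strong-convexity subgradient inequality to obtain the displayed quadratic bound. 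A secondary, routine point is that only $\boldsymbol{t}_i^{\star}$ need be unique for $\nabla D_i$ to be well defined; the argument never relies on uniqueness of the $\boldsymbol{x}_i$ component, so no further assumption on the $\boldsymbol{x}_i$-minimizer is required.
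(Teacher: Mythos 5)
Your proof is correct and arrives at the paper's two inequalities, but the central monotonicity step is derived by a different argument. The paper writes down the first-order variational inequalities satisfied by the two constrained minimizers $(\boldsymbol{x}_i',\boldsymbol{t}_i')$ and $(\boldsymbol{x}_i'',\boldsymbol{t}_i'')$, sums them, and then invokes strong monotonicity of the gradient of $J_i$ (from Assumption \ref{convex}) to obtain $\left< \boldsymbol{\lambda }_i-\boldsymbol{\mu }_i,\boldsymbol{t}_{i}'-\boldsymbol{t}_{i}'' \right> \geqslant m_i\left\| \boldsymbol{t}_{i}'-\boldsymbol{t}_{i}'' \right\| ^2$; you instead sum two quadratic-growth (function-value) inequalities for the minimizers of $J_i+\left< \boldsymbol{\lambda }_i,\cdot \right>$ and $J_i+\left< \boldsymbol{\mu }_i,\cdot \right>$, so that the $J_i$ terms cancel and the same strong-monotonicity inequality drops out without ever differentiating $J_i$. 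From that point the two proofs coincide: Cauchy--Schwarz gives the $\frac{1}{m_i}$-Lipschitz bound, and the summed bound follows by lifting each $D_i$ to the full price space (your observation $\left\| \boldsymbol{\lambda }_i-\boldsymbol{\mu }_i \right\| \leqslant \left\| \boldsymbol{\lambda }-\boldsymbol{\mu } \right\|$ is exactly the paper's remark that $\nabla _{\boldsymbol{\lambda }_{i',j'}}D_i$ vanishes on coordinates absent from $i$'s problem) and applying the triangle inequality. What your variant buys is generality: it works for nonsmooth convex $J_i$ once the quadratic-growth lemma is stated with subgradients, whereas the paper's route is a bit shorter under smoothness. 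Your caveat about joint convexity is also well taken, and in fact applies equally to the paper's own proof: the step that passes from Assumption \ref{convex} to the displayed strong-monotonicity inequality needs $J_i$ to be jointly convex with the $m_i$-strong convexity holding in the $\boldsymbol{t}_i$ block of the joint function (block-separate convexity suffices for neither argument), so the assumption must be read in that joint sense in both proofs.
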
	
\begin{proof}
	See appendix.
\end{proof}
$L_i=\frac{1}{m_i}$ and $L=\sum_{i}{\frac{1}{m_i}}$ are denoted for brevity. To guarantee the convergence of Algorithms \ref{al:partial_edge} and \ref{al:partial}, it is necessary to ensure that all possible trading connections are activated periodically, preventing any trading connections from remaining hidden or inactive.
\begin{assumption}\label{finite_edge}
	All possible trading connections in Algorithm \ref{al:partial_edge} engage in mutual communications at least every $\bar{k}<\infty$ rounds. Similarly, in Algorithm \ref{al:partial}, each trading peer pushes updates to his neighbor $j$ within the $\bar{k}<\infty$ interval.
\end{assumption}
In the analysis below, it is also assumed that the dual function is well defined such that $\infty>-D( \boldsymbol{\lambda } )>-\infty$ for finite $\boldsymbol{\lambda }$.

\subsection{Edge-based Algorithm and its Smart Selection Strategy}
Based on Corollary \ref{th:Lipschitz}, the changes in the negative dual function $-D( \boldsymbol{\lambda } )$ can be bounded, as shown below:
\begin{prop}\label{prop:edge}
	In Algorithm \ref{al:partial_edge}, the negative dual function value at iteration $k$ satisfies:
	\begin{equation}\label{eq:edge}
		\begin{aligned}
			-D( \boldsymbol{\lambda }^{\left( k \right)} ) &\leqslant -D( \boldsymbol{\lambda }^{\left( k-1 \right)}) -
			\\
			&\sum\nolimits_{i,j\in \mathcal{E} ^{\left( k \right)}}^{}{\left( \frac{1}{\rho}-\frac{L}{2} \right) \left\| \boldsymbol{\lambda }_{i,j}^{\left( k \right)}-\boldsymbol{\lambda }_{i,j}^{\left( k-1 \right)} \right\| ^2}
		\end{aligned}	
	\end{equation}		
\end{prop}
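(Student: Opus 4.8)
The plan is to recognize the edge-based dual update \eqref{eq:edge_dual} as a block-coordinate gradient step on the smooth function $-D(\boldsymbol{\lambda})$ and then to apply the standard descent lemma guaranteed by the Lipschitz-gradient property of Corollary \ref{th:Lipschitz}. The only analytic ingredient is the descent lemma itself: because $-D$ has an $L$-Lipschitz gradient, the fundamental-theorem-of-calculus estimate $-D(\boldsymbol{\mu})-(-D(\boldsymbol{\lambda})) = \int_0^1 \left< -\nabla D(\boldsymbol{\lambda}+s(\boldsymbol{\mu}-\boldsymbol{\lambda})),\boldsymbol{\mu}-\boldsymbol{\lambda}\right> ds$ together with Cauchy--Schwarz yields, for all $\boldsymbol{\lambda},\boldsymbol{\mu}$,
\begin{equation}\label{eq:descentlemma}
-D(\boldsymbol{\mu}) \leqslant -D(\boldsymbol{\lambda}) - \left< \nabla D(\boldsymbol{\lambda}),\boldsymbol{\mu}-\boldsymbol{\lambda}\right> + \frac{L}{2}\left\| \boldsymbol{\mu}-\boldsymbol{\lambda}\right\|^2 .
\end{equation}

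Next I would specialise \eqref{eq:descentlemma} to $\boldsymbol{\mu}=\boldsymbol{\lambda}^{(k)}$ and $\boldsymbol{\lambda}=\boldsymbol{\lambda}^{(k-1)}$ and exploit the sparsity of the update. Writing $\Delta_{i,j}:=\boldsymbol{\lambda}_{i,j}^{(k)}-\boldsymbol{\lambda}_{i,j}^{(k-1)}$, the rule \eqref{eq:edge_dual} gives $\Delta_{i,j}=\boldsymbol{0}$ for $(i,j)\notin\mathcal{E}^{(k)}$, while for the active edges $\Delta_{i,j}=\rho(\boldsymbol{t}_{i,j}^{(k)}+\boldsymbol{t}_{j,i}^{(k)})=\rho\,\nabla_{\boldsymbol{\lambda}_{i,j}}D(\boldsymbol{\lambda}^{(k-1)})$ by the gradient identity \eqref{eq:gradient}. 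Hence both terms in \eqref{eq:descentlemma} collapse onto the active coordinates: the quadratic term equals $\sum_{(i,j)\in\mathcal{E}^{(k)}}\|\Delta_{i,j}\|^2$, and the first-order term becomes $\left<\nabla D(\boldsymbol{\lambda}^{(k-1)}),\boldsymbol{\lambda}^{(k)}-\boldsymbol{\lambda}^{(k-1)}\right>=\sum_{(i,j)\in\mathcal{E}^{(k)}}\left<\tfrac{1}{\rho}\Delta_{i,j},\Delta_{i,j}\right>=\tfrac{1}{\rho}\sum_{(i,j)\in\mathcal{E}^{(k)}}\|\Delta_{i,j}\|^2$. The key point is that the inactive coordinates drop out of the inner product even though $\nabla_{\boldsymbol{\lambda}_{i,j}}D$ need not vanish there, precisely because $\Delta_{i,j}=\boldsymbol{0}$ on them.

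Substituting these two expressions back into \eqref{eq:descentlemma} collects a coefficient $-\tfrac{1}{\rho}+\tfrac{L}{2}$ on $\sum_{(i,j)\in\mathcal{E}^{(k)}}\|\Delta_{i,j}\|^2$, that is $-\left(\tfrac{1}{\rho}-\tfrac{L}{2}\right)$, which is exactly the stated bound \eqref{eq:edge}. I do not anticipate a serious obstacle; the only points requiring care are the sign bookkeeping --- we are descending on $-D$, so the ascent direction $+\rho\nabla D$ for $D$ is indeed a descent direction for $-D$ --- and the reduction of the full-gradient inner product to a sum over $\mathcal{E}^{(k)}$, which hinges on the support of the update rather than on the support of the gradient. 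Note also that the inequality is stated with $\|\cdot\|^2$ in \eqref{eq:edge}, consistent with the squared norm arising in the descent lemma.
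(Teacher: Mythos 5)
Your proof is correct and follows essentially the same route as the paper's: apply the descent lemma for the $L$-Lipschitz gradient of $-D$, note that $\boldsymbol{\lambda}_{i,j}^{(k)}-\boldsymbol{\lambda}_{i,j}^{(k-1)}=\boldsymbol{0}$ off $\mathcal{E}^{(k)}$ while on active edges it equals $\rho\,\nabla_{\boldsymbol{\lambda}_{i,j}}D(\boldsymbol{\lambda}^{(k-1)})$, and substitute to collect the coefficient $-\left(\tfrac{1}{\rho}-\tfrac{L}{2}\right)$. Your write-up is in fact slightly more careful than the paper's, since you make explicit that the inactive coordinates drop out of the inner product because of the support of the update, not of the gradient.
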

\begin{proof}
	Because the dual function has $L$-Lipschitz gradients, the negative dual function $-D( \boldsymbol{\lambda })$ also has $L$-Lipschitz gradients. Thereby, the changes between two rounds can be bounded by:
	\begin{equation} \label{eq:lip}
		\begin{aligned}
			&-D( \boldsymbol{\lambda }^{\left( k \right)} ) \leqslant -D( \boldsymbol{\lambda }^{\left( k-1 \right)} )+ 
			\\
			& \langle -\nabla D( \boldsymbol{\lambda }^{\left( k-1 \right)} ) ,\boldsymbol{\lambda }^{\left( k \right)}-\boldsymbol{\lambda }^{\left( k-1 \right)}\rangle+\frac{L}{2}\| \boldsymbol{\lambda }^{\left( k \right)}-\boldsymbol{\lambda }^{\left( k-1 \right)}\| ^2
		\end{aligned}
	\end{equation}
	For peers $(i,j)\notin\mathcal{E}_k$ who do not push updates in round $k$, $\boldsymbol{\lambda }_{i,j}^{\left( k \right)}-\boldsymbol{\lambda }_{i,j}^{\left( k-1 \right)}=\boldsymbol{0}$. For peers $(i,j)\in\mathcal{E}_k$ who are selected to push updates, there is $\nabla_{\boldsymbol{\lambda}_{i,j}} D( \boldsymbol{\lambda }^{\left( k-1 \right)} )=\boldsymbol{t}_{i,j}^{\left( k \right)}+\boldsymbol{t}_{j,i}^{\left( k \right)}=\rho^{-1}(\boldsymbol{\lambda }_{i,j}^{\left( k \right)}-\boldsymbol{\lambda }_{i,j}^{\left( k-1 \right)})$.  Taking the expressions into the inequality \eqref{eq:lip} completes the proof.
\end{proof}	

The next theorem shows that if the learning rate $\rho$ is small, it is guaranteed that the gradient descent can always decrease the value of the negative dual function $-D\left( \boldsymbol{\lambda } \right)$ for Algorithm \ref{al:partial_edge}, which can finally lead to the convergence:
\begin{theorem} \label{th:edge}
	For Algorithm \ref{al:partial_edge}, under assumption \ref{al:partial_edge}, if the learning rate $\rho$ satisfies $0 < \rho < \frac{2}{L}$, then both algorithm converge to the optimal solution of problem (P1).
\end{theorem}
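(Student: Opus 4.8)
The plan is to turn the one-step descent bound of Proposition~\ref{prop:edge} into an asymptotic convergence statement through a sufficient-decrease and telescoping argument, and then to upgrade the resulting ``gradient vanishes on the activated coordinates'' into ``the full dual gradient vanishes'' by exploiting the periodic-activation Assumption~\ref{finite_edge}.

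First I would observe that the hypothesis $0<\rho<\tfrac{2}{L}$ is precisely equivalent to $\tfrac{1}{\rho}-\tfrac{L}{2}>0$, so every summand in Proposition~\ref{prop:edge} is nonnegative and the sequence $-D(\boldsymbol{\lambda}^{(k)})$ is monotonically non-increasing. By strong duality, the identities in \eqref{eq:dual} show that $-D(\boldsymbol{\lambda})$ is bounded below by the finite optimal value of (P1); hence the monotone sequence $-D(\boldsymbol{\lambda}^{(k)})$ converges. Telescoping the inequality of Proposition~\ref{prop:edge} over $k=1,\dots,K$ yields $\sum_{k=1}^{K}\sum_{(i,j)\in\mathcal{E}^{(k)}}(\tfrac{1}{\rho}-\tfrac{L}{2})\,\|\boldsymbol{\lambda}_{i,j}^{(k)}-\boldsymbol{\lambda}_{i,j}^{(k-1)}\|^2 \le -D(\boldsymbol{\lambda}^{(0)})+D(\boldsymbol{\lambda}^{(K)})$, and letting $K\to\infty$ shows the entire double sum is finite. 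Consequently its terms tend to zero, so $\|\boldsymbol{\lambda}_{i,j}^{(k)}-\boldsymbol{\lambda}_{i,j}^{(k-1)}\|\to 0$ for every edge activated at iteration $k$. Since for activated edges $\boldsymbol{\lambda}_{i,j}^{(k)}-\boldsymbol{\lambda}_{i,j}^{(k-1)}=\rho\,\nabla_{\boldsymbol{\lambda}_{i,j}}D(\boldsymbol{\lambda}^{(k-1)})$ by \eqref{eq:edge_dual} and \eqref{eq:gradient}, this means the gradient components along the activated coordinates vanish asymptotically.

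The main obstacle is the next step: promoting this coordinatewise statement to $\|\nabla D(\boldsymbol{\lambda}^{(k)})\|\to 0$, since in any single iteration only a subset $\mathcal{E}^{(k)}$ of coordinates is touched. Here I would invoke Assumption~\ref{finite_edge}: within any window $\{k+1,\dots,k+\bar{k}\}$ every edge $(i,j)$ is activated at least once, say at iteration $k_{ij}$, where its gradient component is small by the previous paragraph. To transfer this back to iteration $k$, I would use that $\nabla D$ is $L$-Lipschitz (Corollary~\ref{th:Lipschitz}), bounding $\|\nabla_{\boldsymbol{\lambda}_{i,j}}D(\boldsymbol{\lambda}^{(k)})\|$ by $\|\nabla_{\boldsymbol{\lambda}_{i,j}}D(\boldsymbol{\lambda}^{(k_{ij})})\|$ plus a drift term controlled by $\sum_{\ell=k+1}^{k+\bar{k}}\|\boldsymbol{\lambda}^{(\ell)}-\boldsymbol{\lambda}^{(\ell-1)}\|$, which itself tends to zero from the finite double sum established above. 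Summing these per-coordinate bounds over all edges and over the at most $\bar{k}$ iterations of the window then gives $\|\nabla D(\boldsymbol{\lambda}^{(k)})\|\to 0$. This essentially-cyclic block-coordinate estimate is the delicate part, because the finitely many Lipschitz error terms must be kept uniformly controlled across the window; the finiteness $\bar{k}<\infty$ is exactly what makes the bound uniform.

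Finally I would close the argument using convexity. The negative dual $-D$ is convex with $L$-Lipschitz gradient, so a stationary point is a global minimizer; combined with $\|\nabla D(\boldsymbol{\lambda}^{(k)})\|\to 0$ and the monotone convergence of $-D(\boldsymbol{\lambda}^{(k)})$, this forces $-D(\boldsymbol{\lambda}^{(k)})$ to the optimal dual value, which by strong duality equals the optimum of (P1). The $m_i$-strong convexity of $J_i$ in $\boldsymbol{t}_i$ from Assumption~\ref{convex} makes the trade profile recovered in Step~1 unique for each dual iterate, so that optimality of the dual value propagates to $\boldsymbol{t}^{(k)}\to\boldsymbol{t}^{\star}$; Theorem~\ref{th:equilbrium} then identifies the limit as the competitive equilibrium and optimal solution of (P1). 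I would note that justifying an accumulation point requires boundedness of the dual iterates, which can be handled by confining them to a sublevel set of the non-increasing $-D(\boldsymbol{\lambda}^{(k)})$, or by arguing directly at the level of function values and the uniquely determined primal sequence, completing the proof.
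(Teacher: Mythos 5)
Your proposal is correct and follows essentially the same route as the paper's proof: telescope the descent bound of Proposition~\ref{prop:edge} using $\tfrac{1}{\rho}-\tfrac{L}{2}>0$, conclude summability and hence vanishing of the activated coordinate residuals, and invoke Assumption~\ref{finite_edge} to cover every edge. You are in fact more careful than the paper at exactly the two places it hand-waves: the paper passes from ``each edge is activated infinitely often'' to $\boldsymbol{t}_{i,j}^{(k)}+\boldsymbol{t}_{j,i}^{(k)}\to\boldsymbol{0}$ without the window/Lipschitz drift estimate you supply, and it then asserts ``the limit point is the optimal solution'' without addressing the existence of limit points at all. One caveat on your closing remark: confining the iterates to a sublevel set of the non-increasing $-D$ only gives boundedness if that sublevel set is itself bounded, which is not automatic for convex functions even when the minimum is attained --- e.g.\ $f(x,y)=\sqrt{x^2+y^2}-x$ attains its minimum $0$, yet along the parabola $y^2=2x+1$ one has $f\equiv 1$ and $\nabla f\to 0$, so gradient vanishing alone does not force values to the optimum. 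Here the fix is legitimate because the interior-point (Slater) condition in Assumption~\ref{convex} makes the dual optimal set nonempty and compact, and a closed convex function with a bounded minimum set has all sublevel sets bounded; citing that fact closes the one soft spot, after which your argument is complete --- indeed more complete than the paper's own.
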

\begin{proof}
	Sum all the inequalities \ref{eq:edge} from $k'=0$ to $k'=k$:
	\begin{equation}
		\begin{aligned}
			&\sum_{k'=0}^{k'=k}{\sum_{i,j\in \mathcal{E} ^{( k' )}}^{}{\left( \frac{1}{\rho}-\frac{L}{2} \right) \left\| \boldsymbol{\lambda }_{i,j}^{( k' )}-\boldsymbol{\lambda }_{i,j}^{( k'-1 )} \right\| ^2}}\leqslant 
			\\
			&-D(\boldsymbol{\lambda }^{\left( 0 \right)})+D(\boldsymbol{\lambda }^{\left( k \right)})\leqslant -D(\boldsymbol{\lambda }^{\left( 0 \right)})+D(\boldsymbol{\lambda }^{\star})<\infty
		\end{aligned}
	\end{equation}
	where the second inequality is obtained from the definition of optimality of the dual function. By assumption 2, the term $\| \boldsymbol{\lambda }_{i,j}^{( k )}-\boldsymbol{\lambda }_{i,j}^{( k-1 )} \|$ appears infinitely many times for every $(i,j)\in\mathcal{E}$. Thereby, when $k\rightarrow \infty$, we must have $ \boldsymbol{\lambda }_{i,j}^{( k )}-\boldsymbol{\lambda }_{i,j}^{( k-1 )}\rightarrow \boldsymbol{0}$, thereby $\boldsymbol{t}_{i,j}^{\left( k \right)}+\boldsymbol{t}_{j,i}^{\left( k \right)}\rightarrow \boldsymbol{0}$ when $k\rightarrow \infty$. Combining the update rule of Step 1 in Algorithm \ref{al:partial_edge} and $\boldsymbol{t}_{i,j}^{\left( k \right)}+\boldsymbol{t}_{j,i}^{\left( k \right)}\rightarrow \boldsymbol{0}$, it is easy to see that the limit point is the optimal solution to (P1).
\end{proof}	
From inequality \eqref{eq:edge}, it is observed that by selecting trading peers who can induce larger gradient norms, the descending speed will be greedily accelerated. This concept forms the basis of the \textit{Gauss-Southwell update rule}: select the `best' coordinate rather than a random coordinate to perform gradient descents, which generally leads to a faster convergence rate~\cite{NutiniSchmidt-39}. Since the gradient is computed as Eq.\eqref{eq:gradient}, \textit{Gauss-Southwell} rule is translated into the smart strategy outlined in Eq.\eqref{eq:edge_selection}. 

\subsection{Node-based Algorithm and its Smart Selection Strategy}
In the context of Algorithm \ref{al:partial}, due to the asynchrony in Steps 2 and 3, Proposition \ref{prop:edge} cannot be applied directly. However, an upper bound can be established to account for the potential adverse effects of this asynchrony:
\begin{prop}\label{prop:node}
	In Algorithm \ref{al:partial}, the negative dual function value at iteration $k$ satisfies:
	\begin{equation}\label{eq:node}
		\begin{aligned}
			-D( \boldsymbol{\lambda }^{\left( k \right)}) &\leqslant -D( \boldsymbol{\lambda }^{\left( 0 \right)}) -
			\\
			\,\, &\left( \frac{1}{\rho}-\frac{L}{2}-2\bar{k}\bar{L}\right) \sum_k{\| \boldsymbol{\lambda }^{\left( k \right)}-\boldsymbol{\lambda }^{\left( k-1 \right)}\| ^2}
		\end{aligned}	
	\end{equation}	
	where $\bar{L} = \max_i L_i$ represents the maximum Lipschitz constant in individual dual functions.	
\end{prop}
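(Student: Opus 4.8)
The plan is to adapt the descent estimate behind Proposition \ref{prop:edge} while carefully accounting for the staleness introduced by the de-synchronized updates \eqref{eq:new_update}. First I would invoke the same $L$-Lipschitz descent inequality \eqref{eq:lip}, namely
\begin{equation*}
-D(\boldsymbol{\lambda}^{(k)}) \leqslant -D(\boldsymbol{\lambda}^{(k-1)}) + \langle -\nabla D(\boldsymbol{\lambda}^{(k-1)}), \boldsymbol{\lambda}^{(k)}-\boldsymbol{\lambda}^{(k-1)}\rangle + \frac{L}{2}\|\boldsymbol{\lambda}^{(k)}-\boldsymbol{\lambda}^{(k-1)}\|^2.
\end{equation*}
The essential difference from the edge-based case is that on an active coordinate the realized increment is $\boldsymbol{\lambda}_{i,j}^{(k)}-\boldsymbol{\lambda}_{i,j}^{(k-1)} = \rho(\tilde{\boldsymbol{t}}_{i,j}^{(k)}+\tilde{\boldsymbol{t}}_{j,i}^{(k)})$, which need not equal $\rho\nabla_{\boldsymbol{\lambda}_{i,j}}D(\boldsymbol{\lambda}^{(k-1)}) = \rho(\boldsymbol{t}_{i,j}^{(k)}+\boldsymbol{t}_{j,i}^{(k)})$, because by \eqref{eq:new_update} one of the two transaction terms may be a stale value carried over from an earlier round (when the pair is active, at least one side has just pushed, so at most one term is stale). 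I would therefore write the realized direction as the true gradient plus an error term $\boldsymbol{e}^{(k)}$ collecting these stale-minus-current discrepancies, which turns the cross term into $-\frac{1}{\rho}\|\boldsymbol{\lambda}^{(k)}-\boldsymbol{\lambda}^{(k-1)}\|^2 + \langle \boldsymbol{e}^{(k)}, \boldsymbol{\lambda}^{(k)}-\boldsymbol{\lambda}^{(k-1)}\rangle$ (inactive coordinates contribute nothing, since their increment is zero).

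Next I would bound the error. Each stale component $\tilde{\boldsymbol{t}}_{j,i}^{(k)}$ equals $\boldsymbol{t}_{j,i}^{(k'')}$ for the last round $k''$ at which $j$ pushed to $i$, and by Assumption \ref{finite_edge} we have $k-k''\leqslant \bar{k}$. Since $\boldsymbol{t}_{j,i}^{(\cdot)}=\nabla_{\boldsymbol{\lambda}_{j,i}}D_j(\boldsymbol{\lambda}_j^{(\cdot-1)})$ by \eqref{eq:gradient}, the Lipschitz bound of Corollary \ref{th:Lipschitz} gives $\|\boldsymbol{t}_{j,i}^{(k'')}-\boldsymbol{t}_{j,i}^{(k)}\|\leqslant L_j\|\boldsymbol{\lambda}_j^{(k''-1)}-\boldsymbol{\lambda}_j^{(k-1)}\|$, which I would telescope into $\bar{L}\sum_{l=k''}^{k-1}\|\boldsymbol{\lambda}_j^{(l)}-\boldsymbol{\lambda}_j^{(l-1)}\|$. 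Summing over coordinates yields a bound of the form $\|\boldsymbol{e}^{(k)}\|\leqslant \bar{L}\sum_{l=k-\bar{k}}^{k-1}\|\boldsymbol{\lambda}^{(l)}-\boldsymbol{\lambda}^{(l-1)}\|$, i.e. the error in round $k$ is controlled by the recent consecutive dual increments lying inside the delay window.

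Finally I would control the accumulated error cross term. Applying Cauchy--Schwarz and then Young's inequality $ab\leqslant\tfrac12(a^2+b^2)$ to each product $\|\boldsymbol{\lambda}^{(l)}-\boldsymbol{\lambda}^{(l-1)}\|\,\|\boldsymbol{\lambda}^{(k)}-\boldsymbol{\lambda}^{(k-1)}\|$, I would sum the per-round inequalities over $k$ and telescope the $-D$ differences down to $-D(\boldsymbol{\lambda}^{(0)})$. The key bookkeeping step is that, in the resulting double sum, each squared increment $\|\boldsymbol{\lambda}^{(l)}-\boldsymbol{\lambda}^{(l-1)}\|^2$ is charged by at most $\bar{k}$ later rounds whose window contains $l$ and is itself paired with $\bar{k}$ earlier increments, so after collecting terms the total error contribution is at most $2\bar{k}\bar{L}\sum_k\|\boldsymbol{\lambda}^{(k)}-\boldsymbol{\lambda}^{(k-1)}\|^2$; combining this with the $-\frac{1}{\rho}$ and $+\frac{L}{2}$ coefficients produces exactly the stated constant $\frac{1}{\rho}-\frac{L}{2}-2\bar{k}\bar{L}$. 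The hard part will be this last counting/Young's-inequality accounting: one must track precisely how many times each increment norm is charged across the overlapping delay windows so that the factor $2\bar{k}$ (rather than a looser constant) emerges, and verify that the Young splitting is symmetric enough that both the ``source'' increment at round $l$ and the ``current'' increment at round $k$ are absorbed into the single sum $\sum_k\|\boldsymbol{\lambda}^{(k)}-\boldsymbol{\lambda}^{(k-1)}\|^2$.
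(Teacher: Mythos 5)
Your overall strategy mirrors the paper's proof: start from the $L$-Lipschitz descent inequality \eqref{eq:lip}, split the realized dual increment into the true gradient plus a staleness error, bound the stale terms by telescoped dual increments over the $\bar{k}$-window (Assumption \ref{finite_edge} plus the Lipschitz property of Corollary \ref{th:Lipschitz}, which the paper packages as Corollary \ref{th:difference}), and finish with Young's inequality and the count that each increment is charged by at most $\bar{k}$ overlapping windows. Up to that point the two arguments are essentially identical.

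The gap is in your aggregation step, where you assert that ``summing over coordinates'' yields
\begin{equation*}
\|\boldsymbol{e}^{(k)}\| \leqslant \bar{L}\sum\nolimits_{l=k-\bar{k}}^{k-1}\|\boldsymbol{\lambda}^{(l)}-\boldsymbol{\lambda}^{(l-1)}\|.
\end{equation*}
This does not follow, and with the constant $\bar{L}$ it is in general false. Each per-coordinate bound you derived involves the \emph{full} node vector $\|\boldsymbol{\lambda}_j^{(l)}-\boldsymbol{\lambda}_j^{(l-1)}\|$ of the stale side $j$, and each edge coordinate $\boldsymbol{\lambda}_{i,j}$ belongs to two node vectors, so a single dual increment can simultaneously drive the errors of several coordinates (e.g., through off-diagonal coupling in a node's dual Hessian); naive summation of the coordinate bounds produces a constant that grows with the number of active stale edges, not $\bar{L}$. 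The tightest clean constant available along your route (group error coordinates by their stale-side node, apply the telescoped Lipschitz bound per node, then Minkowski's inequality together with the identity $\sum_j\|\boldsymbol{\lambda}_j\|^2=2\|\boldsymbol{\lambda}\|^2$, exploiting your correct observation that an active edge has at most one stale side) is $\sqrt{2}\bar{L}$. The paper avoids the issue entirely by never forming a global error norm: it keeps the cross term as the per-node sum $\sum_i\|\boldsymbol{t}_i^{(k)}-\tilde{\boldsymbol{t}}_i^{(k)}\|\,\|\boldsymbol{\lambda}_i^{(k)}-\boldsymbol{\lambda}_i^{(k-1)}\|$, applies Young per node, and only then converts node sums to edge sums, at which point the double counting surfaces explicitly as $(L_i+L_j)\leqslant 2\bar{L}$ --- this is exactly where the $2$ in $2\bar{k}\bar{L}$ originates. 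Relatedly, your bookkeeping is internally inconsistent: granting your error bound with constant $\bar{L}$, the Young-plus-counting step yields $\bar{k}\bar{L}\sum_k\|\cdot\|^2$, not the $2\bar{k}\bar{L}\sum_k\|\cdot\|^2$ you announce. None of this endangers the proposition itself --- any error constant up to $2\bar{L}$ suffices, so replacing your aggregation step with either the $\sqrt{2}\bar{L}$ argument above or the paper's per-node route closes the gap (the former would even give the slightly stronger constant $\sqrt{2}\bar{k}\bar{L}$) --- but as written the key inequality of your proof is unjustified.
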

\begin{proof}
	See appendix.
\end{proof}	
Based on Proposition \ref{prop:node}, the convergence guarantee for Algorithm \ref{th:node} can be obtained:
\begin{theorem} \label{th:node}
	For Algorithm \ref{al:partial}, under assumption \ref{al:partial_edge}, if the learning rate $\rho$ satisfies $ 0<\rho < \left( \frac{L}{2}+2\bar{k}\bar{L} \right) ^{-1} $, then the algorithm converges to the optimal solution of the problem (P1).
\end{theorem}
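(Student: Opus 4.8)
The plan is to imitate the argument used for Theorem \ref{th:edge}, taking Proposition \ref{prop:node} as the starting point and then paying for the extra work caused by the asynchrony of Steps 2 and 3. First I would rearrange the bound in Proposition \ref{prop:node} to isolate the accumulated step sizes,
\begin{equation}
\left( \frac{1}{\rho}-\frac{L}{2}-2\bar{k}\bar{L}\right) \sum_{k'=1}^{k}{\| \boldsymbol{\lambda }^{\left( k' \right)}-\boldsymbol{\lambda }^{\left( k'-1 \right)}\| ^2}\leqslant -D(\boldsymbol{\lambda }^{\left( 0 \right)})+D(\boldsymbol{\lambda }^{\left( k \right)}).
\end{equation}
By optimality of the dual function, $D(\boldsymbol{\lambda }^{\left( k \right)})\leqslant D(\boldsymbol{\lambda }^{\star})$, so the right-hand side is bounded above by the finite constant $-D(\boldsymbol{\lambda }^{\left( 0 \right)})+D(\boldsymbol{\lambda }^{\star})$. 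The hypothesis $0<\rho<\left( \frac{L}{2}+2\bar{k}\bar{L} \right) ^{-1}$ is exactly what makes the prefactor $\frac{1}{\rho}-\frac{L}{2}-2\bar{k}\bar{L}$ strictly positive, so the partial sums of $\| \boldsymbol{\lambda }^{\left( k' \right)}-\boldsymbol{\lambda }^{\left( k'-1 \right)}\|^2$ are uniformly bounded. Letting $k\to\infty$ then gives $\sum_{k'}\| \boldsymbol{\lambda }^{\left( k' \right)}-\boldsymbol{\lambda }^{\left( k'-1 \right)}\|^2<\infty$, and hence $\| \boldsymbol{\lambda }^{\left( k \right)}-\boldsymbol{\lambda }^{\left( k-1 \right)}\|\to 0$.

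The second and harder step is to convert this vanishing of consecutive dual steps into the vanishing of the \emph{true} trading imbalance, despite the fact that \eqref{eq:new_dual} updates $\boldsymbol{\lambda }_{i,j}$ using the possibly stale quantities $\tilde{\boldsymbol{t}}_{i,j}^{\left( k \right)},\tilde{\boldsymbol{t}}_{j,i}^{\left( k \right)}$. My plan is to exploit the finite-delay Assumption \ref{finite_edge}: each stale value is the true gradient evaluated at a dual vector that is at most $\bar{k}$ iterations old. Because the consecutive differences vanish, any window of length $\bar{k}$ also shrinks,
\begin{equation}
\| \boldsymbol{\lambda }^{\left( k \right)}-\boldsymbol{\lambda }^{\left( k-\bar{k} \right)}\| \leqslant \sum_{l=0}^{\bar{k}-1}{\| \boldsymbol{\lambda }^{\left( k-l \right)}-\boldsymbol{\lambda }^{\left( k-l-1 \right)}\|}\to 0.
\end{equation}
Combining this window bound with the $L_i$-Lipschitz gradient from Corollary \ref{th:Lipschitz} and the identity $\boldsymbol{t}_{i,j}^{\left( k \right)}=\nabla_{\boldsymbol{\lambda }_{i,j}}D_i(\boldsymbol{\lambda }_i^{\left( k-1 \right)})$ from \eqref{eq:gradient}, the discrepancy $\| \boldsymbol{t}_{i,j}^{\left( k \right)}-\tilde{\boldsymbol{t}}_{i,j}^{\left( k \right)}\|$ is bounded by $L_i$ times a window of at most $\bar{k}$ consecutive dual steps, and therefore tends to zero.

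Finally I would assemble the pieces. On every iteration at which the pair $(i,j)$ is updated, \eqref{eq:new_dual} gives $\tilde{\boldsymbol{t}}_{i,j}^{\left( k \right)}+\tilde{\boldsymbol{t}}_{j,i}^{\left( k \right)}=\rho^{-1}(\boldsymbol{\lambda }_{i,j}^{\left( k \right)}-\boldsymbol{\lambda }_{i,j}^{\left( k-1 \right)})\to\boldsymbol{0}$; since by \eqref{eq:new_update} this sum is held constant between consecutive updates and, by Assumption \ref{finite_edge}, updates recur at least every $\bar{k}$ rounds, it follows that $\tilde{\boldsymbol{t}}_{i,j}^{\left( k \right)}+\tilde{\boldsymbol{t}}_{j,i}^{\left( k \right)}\to\boldsymbol{0}$ for all $k$. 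Adding the two vanishing staleness gaps from the previous step through the triangle inequality then yields $\boldsymbol{t}_{i,j}^{\left( k \right)}+\boldsymbol{t}_{j,i}^{\left( k \right)}\to\boldsymbol{0}$ for every $(i,j)$. At this point the argument closes exactly as in Theorem \ref{th:edge}: Step 1 guarantees that $(\boldsymbol{x}_i^{\left( k \right)},\boldsymbol{t}_i^{\left( k \right)})$ minimizes the Lagrangian for the current prices, the limiting trades satisfy the balance constraint \eqref{eq:balance}, and together these recover the KKT conditions of (P1), so by Theorem \ref{th:equilbrium} the limit point is the competitive equilibrium, with strong convexity of $J_i$ in $\boldsymbol{t}_i$ pinning down the trades uniquely. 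The main obstacle is the middle step -- bounding the stale-versus-current gap -- since it is the only place where the asynchrony genuinely differs from the edge-based case, and it is precisely the term $2\bar{k}\bar{L}$ in the admissible range of $\rho$ that pays for it.
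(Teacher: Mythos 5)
Your proposal is correct and follows essentially the same route as the paper, which proves Theorem \ref{th:node} by invoking Proposition \ref{prop:node} and repeating the summation argument of Theorem \ref{th:edge} (the step-size condition making the prefactor $\frac{1}{\rho}-\frac{L}{2}-2\bar{k}\bar{L}$ positive). In fact, your middle step controlling the stale-versus-current gap $\|\boldsymbol{t}_{i,j}^{(k)}-\tilde{\boldsymbol{t}}_{i,j}^{(k)}\|$ via the $L_i$-Lipschitz gradients and the finite-delay assumption is exactly the content of the paper's Corollary \ref{th:difference}, so you have supplied details the paper leaves implicit in its one-line proof.
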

The proof follows a reasoning similar to that of Theorem \ref{th:edge}. This theorem implies that in scenarios involving asynchronous trading with numerous peers, a more conservative adjustment of prices (i.e., a smaller step size $\rho$) is prudent to ensure convergence.

Eq.\eqref{eq:node} suggests that selecting trading peers who can induce larger changes in dual variables can effectively accelerate the descent of $-D(\boldsymbol{\lambda})$. In the asynchronous setting of Algorithm \ref{al:partial}, each prosumer $i$ may not have prior knowledge of whether his peer $j\in\mathcal{N}i$ will push updates $\boldsymbol{t}_{j,i}^{(k)}$. Therefore, the optimal strategy for selecting trading peers is based on historical information $\tilde{\boldsymbol{t}}_{j,i}^{(k-1)}$, as outlined in the peer selection strategy Eq.\eqref{eq:node_selection}.

\section{Case Studies} \label{sec:case}
\subsection{Basic Settings and Optimal Trading Results}
\color{black} 
The proposed method can be applied to use cases across various time horizons, ranging from day-ahead (24 hours) to real-time (15 minutes). The case study will use a 24-hour profiled P2P trading example to demonstrate that the proposed method can handle more complex scenarios with longer optimization horizons. 
\color{black}

The P2P is carried out among a total of $I=10^3$ prosumers over the next $T=24$ hours. Each prosumer is randomly assigned potential trading peers, averaging $50$ peers per prosumer, resulting in a total of $40,927$ potential trading connections. The cost function $J_i\left( \boldsymbol{x}_i,\boldsymbol{t}_i \right)$ is set to be composed as follows:
$$
\begin{aligned}
	J_i\left( \boldsymbol{x}_i,\boldsymbol{t}_i \right) =&\sum\nolimits_{\tau}^{}{q _{\tau}^{\mathrm{b}}\left[ p_{i,\tau}^{\mathrm{EX}} \right] ^++q _{\tau}^{\mathrm{s}}\left[ p_{i,\tau}^{\mathrm{EX}} \right] ^-}-V_i\left( \boldsymbol{p}_{i}^{\mathrm{L}} \right) 
	\\
	&+C_{i}^{\mathrm{ES}}\left( \boldsymbol{p}_{i}^{\mathrm{CH}},\boldsymbol{p}_{i}^{\mathrm{DIS}} \right) +C_{i}^{\mathrm{P}2\mathrm{P}}\left( \boldsymbol{t}_{i} \right) 
\end{aligned}
$$ 
The expression is explained as follows:\\
1). $q _{\tau}^{\mathrm{b}}$ and $q _{\tau}^{\mathrm{s}}$ are the unit purchasing and selling prices in the wholesale market, respectively; $q_{\tau}^{\mathrm{b}}[p_{i,\tau}^{\mathrm{EX}}] ^+$ and $q _{\tau}^{\mathrm{s}}[p_{i,\tau}^{\mathrm{EX}}]^-$ denote the monetary amounts for purchasing/selling power in the wholesale market. Price data are derived from the daily average nodal price in the PJM market from July 2021 to July 2022. $\lambda _{t}^{\mathrm{b}}$ is set as twice the nodal price, and $\lambda _{t}^{\mathrm{s}}$ as $1.5$ times.\\
2). $V_i( \boldsymbol{p}_{i}^{\mathrm{L}} )$ is the prosumers' load usage utility function, which is defined as a quadratic function of load power $V_i( \boldsymbol{p}_{i}^{\mathrm{L}} ) =\sum_\tau\xi _{i,\tau} (p_{i,\tau}^{\mathrm{L}})^2+\varrho _{i,\tau} p_{i,\tau}^{\mathrm{L}}$. $I=10^3$ randomly selected load profiles from the Ireland CER project and the London LCL project are utilized to systematically generate prosumers' utility functions.\\ 
3). $C_{i}^{\mathrm{ES}}\left( \boldsymbol{p}_{i}^{\mathrm{CH}},\boldsymbol{p}_{i}^{\mathrm{DIS}} \right)$ is the storage aging cost. It is set as $C_{i}^{\mathrm{ES}}( \boldsymbol{P}_{i}^{\mathrm{CH}}+\boldsymbol{P}_{i}^{\mathrm{DIS}})=\sum_\tau c_i\times(P_{i,\tau}^{\mathrm{CH}}+P_{i,\tau}^{\mathrm{DIS}})$ where $c_i\in[2,4]$ ¢/kW is the unit aging cost.\\
4). P2P trading network usage fee:$C_{i}^{\mathrm{P}2\mathrm{P}}\left( \boldsymbol{t}_i \right) =\sum\nolimits_{j\in \mathcal{N} _i}^{}{\left( \alpha _{i,j}\left| \boldsymbol{t}_{i,j} \right| +\beta _{i,j} \right) \left| \boldsymbol{t}_{i,j} \right\|}$ is the P2P trading network usage fee. This fee captures the tierary tax scheme for P2P trading, where $\alpha _{i,j}=1$, $\beta _{i,j}=1$.

Individual constraints $\Omega_i$ include:\\
1. Power balance equation for the prosumer $i$ ($p_{i,\tau}^{\mathrm{PV}}$ is the generation power of PV panels):
$$
p_{i,\tau}^{\mathrm{EX}}=p_{i,\tau}^{\mathrm{L}}+p_{i,\tau}^{\mathrm{CH}}-p_{i,\tau}^{\mathrm{DIS}}+\sum\nolimits_{j\in\mathcal{N}_i}{t_{i,j,\tau}}-p_{i,\tau}^{\mathrm{PV}}
$$
2. Storage's SOC changes, continuity, and limits ($\eta _{i}^{\mathrm{CH}}$, $\eta _{i}^{\mathrm{DIS}}$ are the charging and discharging efficiency; $\underline{s}_i$,$\overline{s}_i$ are the SOC's lower and upper limits):
$$
\begin{aligned}
	&s_{i,\tau}=s_{i,\tau-1}+\eta _{i}^{\mathrm{CH}}p_{i,\tau}^{\mathrm{CH}}-\eta _{i}^{ \mathrm{DIS}}p_{i,\tau}^{\mathrm{DIS}}
	\\
	&s_{i,0}=s_{i,\bar{\tau}},  s_{i,\tau}\in \left[ \underline{s}_i,\overline{s}_i \right] 
\end{aligned}
$$
3. Load shift constraints ($ \underline{p_{i}^{\mathrm{L}}},\overline{p_{i}^{\mathrm{L}}}$ are the load power's lower and upper limits; $p_{i\Sigma}^{\mathrm{L}}$ is the minimum total daily loads):
$$
p_{i,\tau}^{\mathrm{L}}\in [ \underline{p_{i}^{\mathrm{L}}},\overline{p_{i}^{\mathrm{L}}} ] , \sum\nolimits_\tau{p_{i,\tau}^{\mathrm{L}}}\geqslant p_{i\Sigma}^{\mathrm{L}}
$$
4. Other limits ($\underline{p_{i,\tau}^{\mathrm{EX}}},\overline{p_{i,\tau}^{\mathrm{EX}}}$ are the lower and upper limits for exchanged power; $\overline{p_{i}^{\mathrm{CH}}},\overline{p_{i}^{\mathrm{DIS}}}$ are the upper limits for the storage's charging and discharging power):
$$
\begin{aligned}
	p_{i,\tau}^{\mathrm{EX}}\in[ \underline{p_{i,\tau}^{\mathrm{EX}}},\overline{p_{i,\tau}^{\mathrm{EX}}} ], p_{i,\tau}^{\mathrm{CH}}\in [ 0,\overline{p_{i}^{\mathrm{CH}}} ] , p_{i,\tau}^{\mathrm{DIS}}\in [ 0,\overline{p_{i}^{\mathrm{DIS}}} ] \\
\end{aligned}
$$
Some paramerters are set as follows: $\varrho_{i,t}$ is randomly selected between $[10,20]$ ¢/kW. $\xi _{i,t}$ is set to $-\varrho_{i,t}/2\overline{p_{i,t}^{\mathrm{L}}}$ ¢/(kW)$^2$. $\overline{p_{i,t}^{\mathrm{L}}}$ is set to be 3 times of the recorded load profile; $\underline{p_{i,t}^{\mathrm{L}}}$ is set to be half of the recorded load profile. The maximum capacity of energy storage is set to 4 times of the average daily load in recorded profiles. The SOC is limited between $[0.1,1]$ of its capacity. $s_{i,0}=s_{i,\bar{\tau}}$ is set to $0.55$ times of the storage capacity. The charging and discharging efficiencies are both set at $95\%$. For the primal-dual algorithm, the learning rate $\rho=0.5$. All initial values of variables are set to 0. All the experiments are implemented using Matlab software's parallel workers on the servers with an AMD EPYC 7H12 @ 2.60GHz CPU and 384.0 GB of RAM at the Beijing Super Cloud Computing Center. The optimization solver is the Gurobi software.

In analyzing the results obtained from solving problem (P1), as depicted in Fig.\ref{fig:result}. A key finding is the effective utilization of PV generation and energy storage facilitated by the P2P trading framework. Notably, it is observed that when P2P trading activity within the community is active, the corresponding trading prices typically fall between the wholesale market's purchasing and selling prices. This trend is a crucial reflection of the 'individual rationality' property inherent in the P2P trading scheme. This result underscores the potential of P2P trading to not only enhance energy efficiency at the community level but also to ensure that such enhancements are economically beneficial to each participant. 

\begin{figure}[t]
	\centering
	\includegraphics[width=0.6\textwidth]{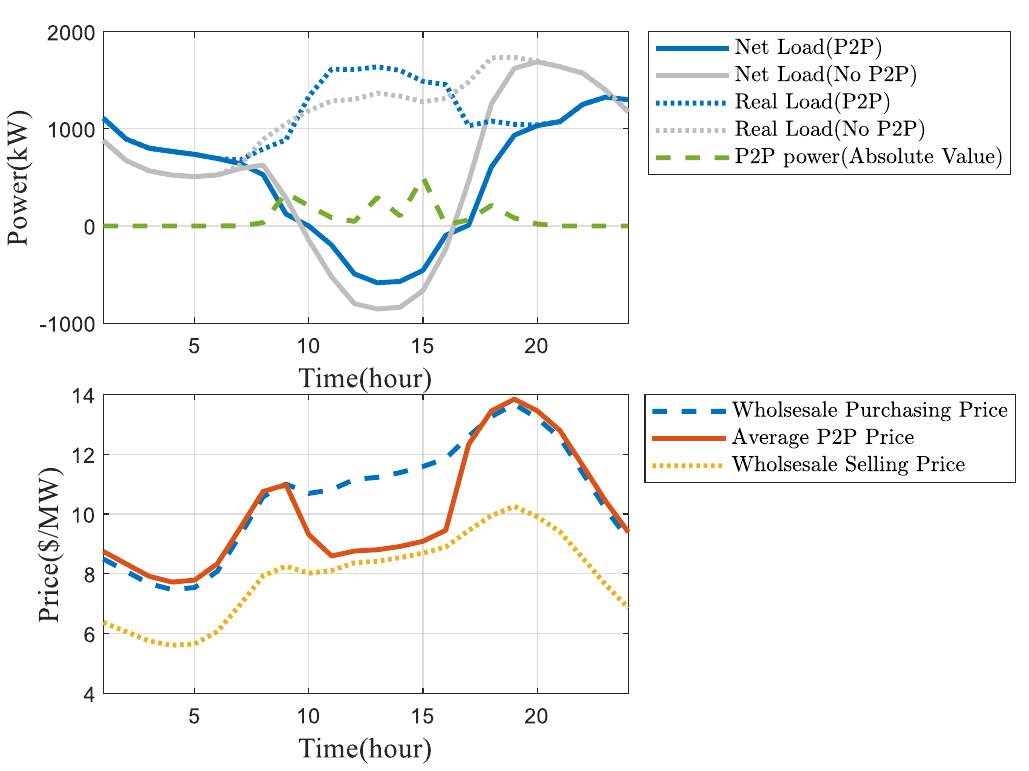}
	\caption{The optimal P2P trading solution and the P2P trading prices.}
	\label{fig:result}
\end{figure}

\subsection{Algorithms and Smart Selection Strategies}
\color{black}
The effectiveness of the smart selection strategies in the edge-based and node-based algorithms is evaluated, by employing the following benchmarks for comparison:
\begin{itemize}[leftmargin=8pt]
\item Round-Robin Strategy: In this strategy, the sensory oracle (for edge-based algorithms) or prosumer $i$ (for node-based algorithms) selects the trading peers in a sequential, cyclical manner. This ensures that each potential trading connection is selected in turn, before the cycle repeats. 
\item Random Strategy: In this strategy, the sensory oracle (in edge-based algorithms) or prosumer $i$ (in node-based algorithms) selects the trading peers in a random manner. This ensures each potential trading connection is selected in turn before the cycle repeats.
\item Smart Strategy: This strategy involves selection based on specific criteria outlined for edge-based algorithms and for node-based algorithms. Selections are made to optimize certain objectives, such as faster convergence or more efficient trading outcomes.
\end{itemize}
In the three communication link trading selection strategy, when the number of active communication links is fixed, the latency per iteration remains equal. Therefore, the total latency can be measured by the total number of iterations.
\color{black}
To compare these strategies, three indexes are used:
\begin{itemize}[leftmargin=8pt]
	\item Average Solution Optimality Gap: defined as $I^{-1}\sum\nolimits_i{\| \boldsymbol{x}_{i}^{\left( k \right)}-\boldsymbol{x}_{i}^{\star} \| +\| \tilde{\boldsymbol{t}}_{i}^{\left( k \right)}-\boldsymbol{t}_{i}^{\star} \|}$. This measures the average deviation of the solution in round $k$ from the optimal solution. 
	\item P2P Transaction Changes: defined as $I^{-1}\sum\nolimits_i{\| \tilde{\boldsymbol{t}}_{i}^{\left( k \right)}-\tilde{\boldsymbol{t}}_{i}^{\left( k-1 \right)} \|}$. Representing the primal residue, this index captures the changes in P2P transactions between consecutive rounds.
	\item P2P Price Changes: defined as $I^{-1}\sum\nolimits_i{\| \tilde{\boldsymbol{\lambda}}_{i}^{\left( k \right)}-\tilde{\boldsymbol{\lambda}}_{i}^{\left( k-1 \right)} \|}$. As the dual residue, this measures the fluctuation in P2P prices between consecutive rounds. 
\end{itemize}
It is important to note that primal and dual residues may not always be effective indicators of convergence, especially in asynchronous algorithms. In scenarios where only a few peers update in each round, the residue values can be misleadingly small, not necessarily reflecting true convergence.

\subsubsection{Edge-based Algorithm}
\begin{figure}[t]
	\centering
	\includegraphics[width=0.5\textwidth]{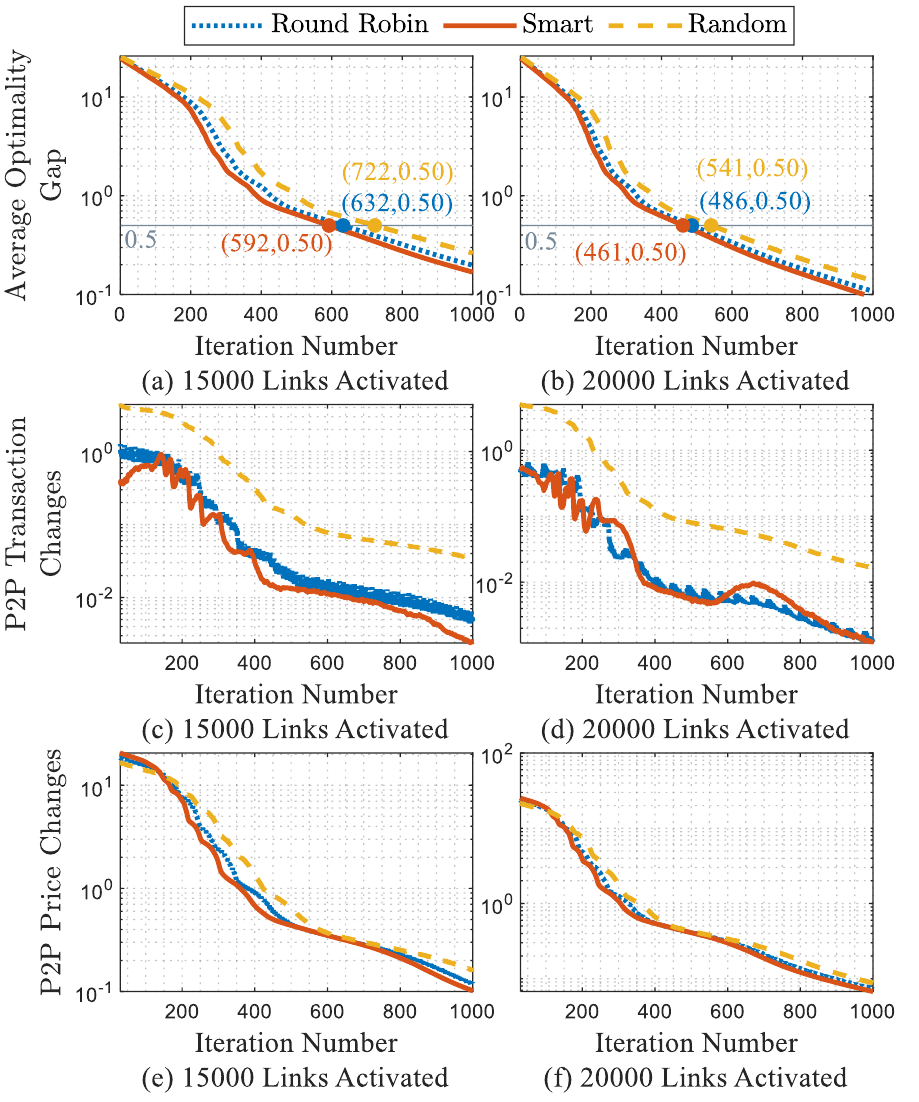}
	\caption{Convergence comparisons of (a-b) the average optimality gap, (c-d) the P2P transaction changes, and (e-f) the P2P Price changes for the edge-based connection-aware P2P trading algorithm.}
	\label{fig:edge}
\end{figure}
The effectiveness of the edge-based algorithm and its smart selection strategy is examined. It should be noted that the edge-based algorithm is a semisynchronized algorithm, since two prosumers of a trading pair should always perform mutual updates. Therefore, only two test cases are considered for the algorithm. Each case involves activating different proportions of peer connections per iteration round: approximately 37.5\% (15,000 edges) and 50\% (20,000 edges). 

The convergence results over 1,000 iteration rounds are illustrated in Fig.\ref{fig:edge}. Observations indicate that all three selection strategies—smart, random, and round-robin—progressively move towards convergence. The optimality gap is used as the principal criterion for assessing convergence. In particular, the smart strategy demonstrates a 6\%-7\% reduction in the number of iteration rounds required to achieve a gap level of 0.5, compared to random and round-robin strategies. The asynchrony in the algorithm manifests in the non-smooth decline of primal and dual residues. Despite this, it is evident that the smart selection strategy yields comparatively lower residue levels in the same iteration rounds. This suggests that while asynchrony introduces complexity into the convergence process, the smart strategy effectively mitigates this challenge, thus enhancing overall convergence efficiency.

\subsubsection{Node-based Algorithm}
\begin{figure*}[t]
	\centering
	\includegraphics[width=1\textwidth]{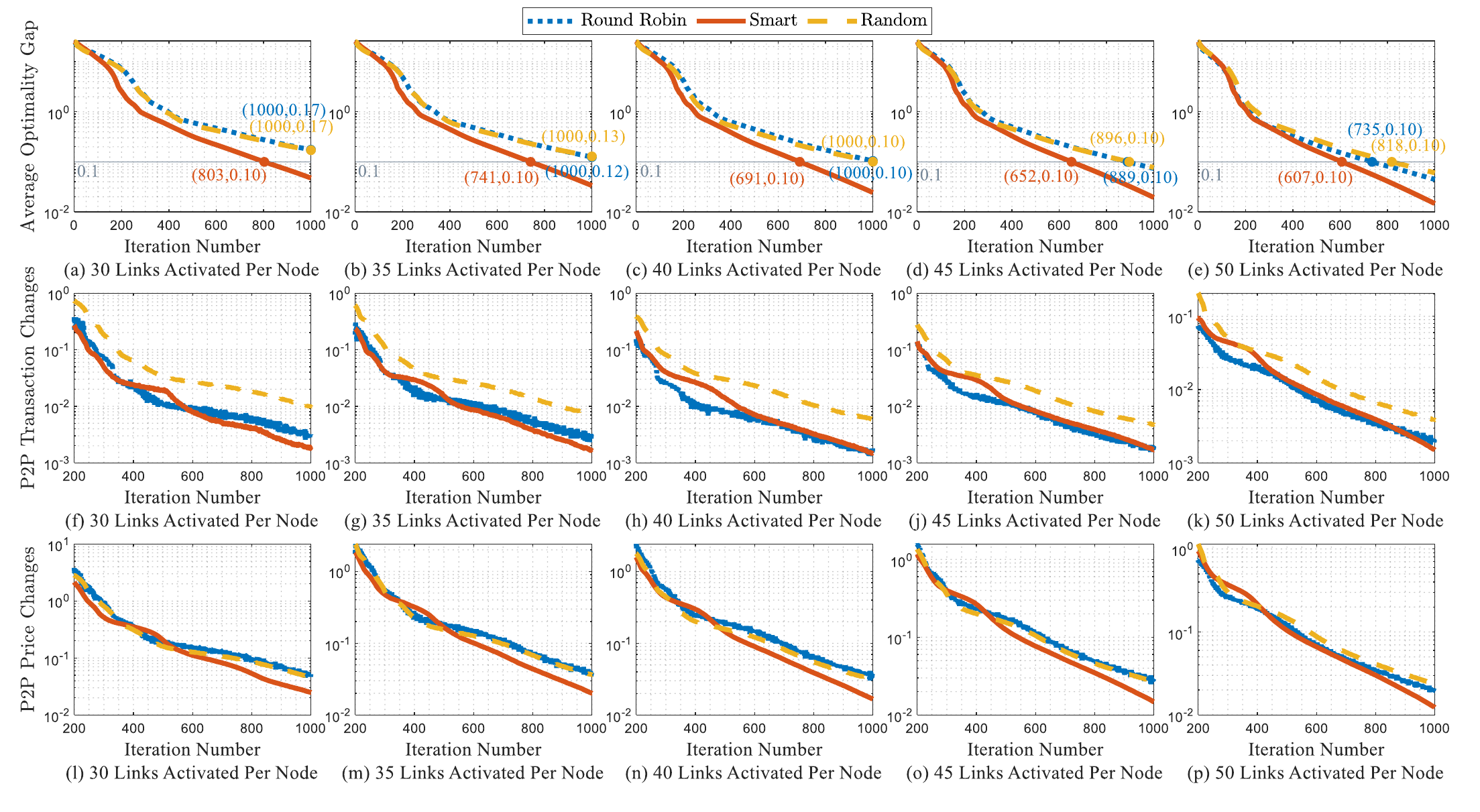}
	\caption{Convergence comparisons of (a-e) the average optimality gap, (f-k) the P2P transaction changes, and (l-p) the P2P Price changes for the node-based connection-aware P2P trading algorithm.}
	\label{fig:node}
\end{figure*}
Next, we examine the effectiveness of node-based algorithms and their smart selection strategy. Since the node-based algorithm is more realistic in real-world situations, a broader range of scenarios are examined. Specifically, five different cases are tested, varying the number of activated connection links per prosumer from 30 to 50 in increments of 5. The convergence results over 1,000 iteration rounds are depicted in Fig.\ref{fig:node}, which shows that all three selection strategies - smart, random and round-robin - are progressing towards convergence. Compared to edge-based algorithms, the smart peer selection strategy in node-based algorithms performs even better. It can reduce the number of iteration rounds needed to reach the level of the 0.1 gap by $15\%$ to $35\%$ compared to random and round-robin strategies. This analysis highlights that, through well-devised peer selection, node-based algorithms can achieve rapid convergence, making them highly suitable for real-world P2P energy trading systems.

\section{Conclusion} \label{sec:conclusions}
In this paper, a connection-aware trading algorithms and smart selection strategies designed for large-scale prosumer P2P trading networks is proposed. Our approach effectively addresses the challenges of asynchrony in P2P trading, while fully honoring the autonomy of prosumers in their self-selection strategies.  For the realistic node-based connection-aware P2P trading algorithm, the proposed method can not only lead to convergence but also to a reduction in trading agreement convergence time by $15\%$ to $35\%$. This represents a substantial improvement in efficiency, paving the way for more dynamic and responsive P2P trading environments. Future work can include a more detailed model for network constraints and a broader range of prosumer details. Additionally, the framework may be adaptable to other types of information structures of P2P trading and could include more diverse types of algorithms.

\appendix

\section*{Appendix}
\subsection{Proof of the Corollary \ref{th:Lipschitz}}
When the dual variables are $\boldsymbol{\lambda}$ and  $\boldsymbol{\mu}$, the optimal solution to prosumer $i$'s optimization problems is denoted by:
\begin{align}
	&\boldsymbol{x}_{i}',\boldsymbol{t}_{i}'=\mathrm{arg}\min\nolimits_{\left[ \boldsymbol{x}_i,\boldsymbol{t} \right] \in \Omega _i} J_i\left( \boldsymbol{x}_i,\boldsymbol{t}_i \right) +\left< \boldsymbol{\lambda }_i,\boldsymbol{t}_i \right> 
	\\
	&\boldsymbol{x}_{i}'',\boldsymbol{t}_{i}''=\mathrm{arg}\min\nolimits_{\left[ \boldsymbol{x}_i,\boldsymbol{t} \right] \in \Omega _i} J_i\left( \boldsymbol{x}_i,\boldsymbol{t}_i \right) +\left< \boldsymbol{\mu }_i,\boldsymbol{t}_i \right> 
\end{align}
According to the optimality condition of the constrained optimization problem, we have:
\begin{align}
	&\left< \nabla _{\boldsymbol{x}_i}J_i\left( \boldsymbol{x}_{i}',\boldsymbol{t}_{i}' \right) ,\boldsymbol{x}_{i}''-\boldsymbol{x}_{i}' \right> \nonumber
	\\
	&~~~~~~~~+\left< \nabla _{\boldsymbol{t}_{\boldsymbol{i}}}J_i\left( \boldsymbol{x}_{i}',\boldsymbol{t}_{i}' \right) +\boldsymbol{\lambda }_i,\boldsymbol{t}_{i}''-\boldsymbol{t}_{i}' \right> \geqslant 0
	\\
	&\left< \nabla _{\boldsymbol{x}_i}J_i\left( \boldsymbol{x}_{i}'',\boldsymbol{t}_{i}'' \right) ,\boldsymbol{x}_{i}'-\boldsymbol{x}_{i}'' \right> \nonumber
	\\
	&~~~~~~~~+\left< \nabla _{\boldsymbol{t}_{\boldsymbol{i}}}J_i\left( \boldsymbol{x}_{i}'',\boldsymbol{t}_{i}'' \right) +\boldsymbol{\mu }_i,\boldsymbol{t}_{i}'-\boldsymbol{t}_{i}'' \right> \geqslant 0
\end{align}
Summing the two inequalities, we have:
\begin{align}
	&\left< \boldsymbol{\lambda }_i-\boldsymbol{\mu }_i,\boldsymbol{t}_{i}'-\boldsymbol{t}_{i}'' \right> \geqslant \nonumber
	\\
	&~~~\left< \nabla _{\boldsymbol{x}_i}J_i\left( \boldsymbol{x}_{i}',\boldsymbol{t}_{i}' \right) -\nabla _{\boldsymbol{x}_i}J_i\left( \boldsymbol{x}_{i}'',\boldsymbol{t}_{i}'' \right) ,\boldsymbol{x}_{i}'-\boldsymbol{x}_{i}'' \right> + \label{eq:sum}
	\\
	&~~~\left< \nabla _{\boldsymbol{t}_{\boldsymbol{i}}}J_i\left( \boldsymbol{x}_{i}',\boldsymbol{t}_{i}' \right) -\nabla _{\boldsymbol{t}_{\boldsymbol{i}}}J_i\left( \boldsymbol{x}_{i}'',\boldsymbol{t}_{i}'' \right) ,\boldsymbol{t}_{i}'-\boldsymbol{t}_{i}'' \right> \nonumber
\end{align}
According to assumption \ref{convex}, we further have:
\begin{align}
	&\left< \nabla _{\boldsymbol{x}_i}J_i\left( \boldsymbol{x}_{i}',\boldsymbol{t}_{i}' \right) -\nabla _{\boldsymbol{x}_i}J_i\left( \boldsymbol{x}_{i}'',\boldsymbol{t}_{i}'' \right) ,\boldsymbol{x}_{i}'-\boldsymbol{x}_{i}'' \right>  + 
	\\
	&\left< \nabla _{\boldsymbol{t}_{\boldsymbol{i}}}J_i\left( \boldsymbol{x}_{i}',\boldsymbol{t}_{i}' \right) -\nabla _{\boldsymbol{t}_{\boldsymbol{i}}}J_i\left( \boldsymbol{x}_{i}'',\boldsymbol{t}_{i}'' \right) ,\boldsymbol{t}_{i}'-\boldsymbol{t}_{i}'' \right> \geqslant m_i\left\| \boldsymbol{t}_{i}'-\boldsymbol{t}_{i}'' \right\| ^2 \nonumber
\end{align}
Combining \eqref{eq:sum}, we have:
\begin{equation} \label{eq0}
	\begin{aligned}
		&\left\| \boldsymbol{\lambda }_i-\boldsymbol{\mu }_i \right\| \left\| \boldsymbol{t}_{i}'-\boldsymbol{t}_{i}'' \right\| \geqslant \left< \boldsymbol{\lambda }_i-\boldsymbol{\mu }_i,\boldsymbol{t}_{i}'-\boldsymbol{t}_{i}'' \right>  \geqslant m_i\left\| \boldsymbol{t}_{i}'-\boldsymbol{t}_{i}'' \right\| ^2
	\end{aligned}
\end{equation}
Using $\nabla D_i\left( \boldsymbol{\lambda }_i \right) =\boldsymbol{t}_{i}',\nabla D_i\left( \boldsymbol{\mu }_i \right) =\boldsymbol{t}_{i}''$, we get:
\begin{equation} \label{eq:lip1}
	m_i^{-1}\left\| \boldsymbol{\lambda }_i-\boldsymbol{\mu }_i \right\| \geqslant  \left\| \nabla D_i\left( \boldsymbol{\lambda }_i \right) -\nabla D_i\left( \boldsymbol{\mu }_i \right) \right\| 
\end{equation}
Thereby, $D_i\left( \boldsymbol{\lambda }_i\right)$ has a $\frac{1}{m_i}$-Lipschitz gradient. Moreover, we can write the inequalitie \eqref{eq:lip1} as:
\begin{equation} \label{eq:lip2}
	m_i^{-1}\left\| \boldsymbol{\lambda }-\boldsymbol{\mu } \right\| \geqslant  \left\| \nabla D_i\left( \boldsymbol{\lambda } \right) -\nabla D_i\left( \boldsymbol{\mu } \right) \right\| 
\end{equation}
because $\nabla_{\boldsymbol{\lambda}_{i',j'}} D_i\left( \boldsymbol{\lambda } \right)= \boldsymbol{0}$ if $(i',j')$ does not do not appear in $i$'s optimization problem. According to the definition of  $D\left( \boldsymbol{\lambda }\right)$, we have:
\begin{equation}
	\begin{aligned}
		&\left\| \nabla D\left( \boldsymbol{\lambda } \right) -\nabla D\left( \boldsymbol{\mu } \right) \right\| =\left\| \sum\nolimits_i^{}{\nabla D_i\left( \boldsymbol{\lambda } \right) -\nabla D_i\left( \boldsymbol{\mu } \right)} \right\| 
		\\
		&\overset{\left( a \right)}{\leqslant}\sum\nolimits_i^{}{\left\| \nabla D_i\left( \boldsymbol{\lambda } \right) -\nabla D_i\left( \boldsymbol{\mu } \right) \right\|}\overset{\left( b \right)}{\leqslant} ( \sum\nolimits_{i}{ m_i^{-1} } ) \left\| \boldsymbol{\lambda }-\boldsymbol{\mu } \right\|                
	\end{aligned}
\end{equation}
where (a) we use the triangle inequality; (b) we use the inequality \eqref{eq:lip2}.
\subsection{Proof of the Proposition \ref{th:node}}
The following corollary is provided for proof assistance:
\begin{corollary}\label{th:difference}
	The difference between newest trading proposal $\boldsymbol{t}_{i}^{(k)}$ and $\tilde{\boldsymbol{t}}_{i}^{(k)}$ can be bounded as follows:
	\begin{equation}\label{eq1}
		\| \boldsymbol{t}_{i}^{\left( k \right)}-\tilde{\boldsymbol{t}}_{i}^{(k)} \| \leqslant \sum_{k'=k-\bar{k}}^{k'=k}{\frac{1}{L_i}\| \boldsymbol{\lambda }_{i}^{( k'+1)}-\boldsymbol{\lambda }_{i}^{( k' )} \|}
	\end{equation}
\end{corollary}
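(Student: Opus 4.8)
The plan is to read both proposals as blocks of the dual gradient. By the block identity $\boldsymbol{t}_{i,j}^{(k)} = \nabla_{\boldsymbol{\lambda}_{i,j}} D_i(\boldsymbol{\lambda}_i^{(k-1)})$ underlying \eqref{eq:gradient}, the freshly computed proposal is exactly the $(i,j)$-block of $\nabla D_i$ evaluated at the \emph{current} price vector. The recorded value $\tilde{\boldsymbol{t}}_{i,j}^{(k)}$, by the bookkeeping rule \eqref{eq:new_update}, is simply the last pushed proposal $\boldsymbol{t}_{i,j}^{(k_j)} = \nabla_{\boldsymbol{\lambda}_{i,j}} D_i(\boldsymbol{\lambda}_i^{(k_j-1)})$, where $k_j\le k$ denotes the most recent round in which $i$ actually transmitted to $j$. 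Thus every block of $\boldsymbol{t}_i^{(k)}-\tilde{\boldsymbol{t}}_i^{(k)}$ is a difference of one gradient block evaluated at two prices, and the whole problem reduces to bounding how far $\boldsymbol{\lambda}_i$ has drifted since each block was last refreshed.

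First I would telescope each block across the intervening price increments,
\begin{equation}
\boldsymbol{t}_{i,j}^{(k)} - \tilde{\boldsymbol{t}}_{i,j}^{(k)}
= \sum_{k'=k_j-1}^{k-2}\left[\nabla_{\boldsymbol{\lambda}_{i,j}} D_i(\boldsymbol{\lambda}_i^{(k'+1)}) - \nabla_{\boldsymbol{\lambda}_{i,j}} D_i(\boldsymbol{\lambda}_i^{(k')})\right],
\end{equation}
so that Corollary \ref{th:Lipschitz} can make each consecutive gradient difference controllable by the single-step price increment $\boldsymbol{\lambda}_i^{(k'+1)}-\boldsymbol{\lambda}_i^{(k')}$.

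The hard part will be that different peers $j$ carry different staleness depths $k-k_j$, so the per-block sums run over different ranges and cannot be merged by a single scalar delay; a naive block-by-block triangle inequality would moreover leak a factor depending on the number of neighbours. I would resolve this by \emph{regrouping the telescopes by round} rather than by block: for each $k'\in[k-\bar{k},k-1]$ define the vector $\boldsymbol{g}^{(k')}$ whose $(i,j)$-block equals the gradient increment $\nabla_{\boldsymbol{\lambda}_{i,j}} D_i(\boldsymbol{\lambda}_i^{(k'+1)})-\nabla_{\boldsymbol{\lambda}_{i,j}} D_i(\boldsymbol{\lambda}_i^{(k')})$ when $k'$ falls inside $j$'s staleness window and is zero otherwise, so that $\boldsymbol{t}_i^{(k)}-\tilde{\boldsymbol{t}}_i^{(k)}=\sum_{k'}\boldsymbol{g}^{(k')}$. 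The key observation is that $\boldsymbol{g}^{(k')}$ is obtained from the \emph{full} increment $\nabla D_i(\boldsymbol{\lambda}_i^{(k'+1)})-\nabla D_i(\boldsymbol{\lambda}_i^{(k')})$ by zeroing out some blocks, hence $\|\boldsymbol{g}^{(k')}\|\le\|\nabla D_i(\boldsymbol{\lambda}_i^{(k'+1)})-\nabla D_i(\boldsymbol{\lambda}_i^{(k')})\|$, which sidesteps the neighbour-count factor.

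Finally I would assemble the pieces: the triangle inequality gives $\|\boldsymbol{t}_i^{(k)}-\tilde{\boldsymbol{t}}_i^{(k)}\|\le\sum_{k'}\|\boldsymbol{g}^{(k')}\|$, Corollary \ref{th:Lipschitz} bounds each summand by $L_i\|\boldsymbol{\lambda}_i^{(k'+1)}-\boldsymbol{\lambda}_i^{(k')}\|$, and Assumption \ref{finite_edge} guarantees $k_j\ge k-\bar{k}$ for every $j$, so that all active windows sit inside $[k-\bar{k},k]$. Extending the summation to this common range only adds nonnegative terms and delivers the stated bound, with the per-agent Lipschitz constant $L_i$ of Corollary \ref{th:Lipschitz} as the prefactor.
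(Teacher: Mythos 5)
Your proof is correct and is essentially the paper's own argument in different notation: your round-indexed masked increment $\boldsymbol{g}^{(k')}$ coincides exactly with the increment $\hat{\boldsymbol{t}}_{i}^{(k'+1)}-\hat{\boldsymbol{t}}_{i}^{(k')}$ of the auxiliary ``frozen'' trajectory the paper constructs, and both proofs then rest on the same two facts --- zeroing out blocks cannot increase the Euclidean norm, and $\nabla D_i$ is $L_i$-Lipschitz by Corollary \ref{th:Lipschitz}. One remark: like the paper's own proof (and its subsequent use in Proposition \ref{prop:node}), your argument delivers the prefactor $L_i = 1/m_i$ rather than the $\frac{1}{L_i}$ printed in the statement, which confirms that the latter is a typo in the corollary.
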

\begin{proof}
	Denote the latest round that prosumer $i$ pushes updates to peer $j$ as $k_{i,j}^{i}$. Because each $\boldsymbol{t}_{i,j}$ is at least updated in the last $\bar{k}$ rounds, define the following anxillary variable:
	\begin{equation}
		\hat{\boldsymbol{t}}_{i,j}^{(k')}=\begin{cases}
			\boldsymbol{t}_{i,j}^{(k')}, k\geqslant k'\geqslant k_{i,j}^{i}\\
			\boldsymbol{t}_{i,j}^{(k_{i,j}^{i})}, k_{i,j}^{i}>k'\geqslant k-\bar{k}\\
		\end{cases}
	\end{equation}
	$\hat{\boldsymbol{t}}_{i,j}^{(k')}$ equals to $\boldsymbol{t}_{i,j}^{(k')}$ after the most updates, and remains the same value $\boldsymbol{t}_{i,j}^{(k_{i,j}^{i})}$ before updates. From the definition, it is easy to obtain:
	\begin{equation}
		\left\| \hat{\boldsymbol{t}}_{i,j}^{(k'+1)}-\hat{\boldsymbol{t}}_{i,j}^{(k')} \right\| \leqslant \left\| \boldsymbol{t}_{i,j}^{(k'+1)}-\boldsymbol{t}_{i,j}^{(k')} \right\|, \forall k \geqslant k'\geqslant k-\bar{k} 
	\end{equation}
	Thereby, for $\hat{\boldsymbol{t}}_{i}=\mathrm{col}(\hat{\boldsymbol{t}}_{i,j})$ and $\boldsymbol{t}_{i}=\mathrm{col}(\boldsymbol{t}_{i,j})$, we have
	\begin{equation} \label{eq3}
		\left\| \hat{\boldsymbol{t}}_{i}^{(k'+1)}-\hat{\boldsymbol{t}}_{i}^{(k')} \right\| \leqslant \left\| \boldsymbol{t}_{i}^{(k'+1)}-\boldsymbol{t}_{i}^{(k')} \right\|, \forall k \geqslant k'\geqslant k-\bar{k} 
	\end{equation}
	Then inequality \eqref{eq1} can be obtained as follows:
	\begin{align}
		&\| \boldsymbol{t}_{i}^{\left( k \right)}-\tilde{\boldsymbol{t}}_{i}^{(k)} \| 
		\\
		=&\left\| \mathrm{col}\left(\boldsymbol{t}_{i,j}^{\left( k \right)}-\boldsymbol{t}_{i,j}^{(k_{i,j}^{i})}\right) \right\|=\left\|\mathrm{col}\left(\sum_{k'=k_{i,j}^{i}}^{k'=k-1}{\boldsymbol{t}_{i,j}^{( k'+1 )}-\boldsymbol{t}_{i,j}^{( k' )}}\right) \right\| \nonumber
		\\
		\overset{\left( a \right)}{=}&\left\| \mathrm{col}\left( \sum_{k'=\bar{k}}^{k'=k-1}{\hat{\boldsymbol{t}}_{i,j}^{(k'+1)}-\hat{\boldsymbol{t}}_{i,j}^{(k')}} \right) \right\| =\left\| \sum_{k'=\bar{k}}^{k'=k-1}{\hat{\boldsymbol{t}}_{i}^{(k'+1)}-\hat{\boldsymbol{t}}_{i}^{(k')}} \right\| \nonumber
		\\
		\overset{\left( b \right)}{\leqslant}& \sum_{k'=k-\bar{k}}^{k'=k-1}{\left\| \hat{\boldsymbol{t}}_{i}^{( k'+1)}-\hat{\boldsymbol{t}}_{i}^{( k' )} \right\|} \overset{\left( c \right)}{\leqslant} \sum_{k'=k-\bar{k}}^{k'=k-1}{\left\| \boldsymbol{t}_{i}^{( k'+1)}-\boldsymbol{t}_{i}^{( k' )} \right\|} \nonumber
	\end{align}
	where (a) we use the fact that $\hat{\boldsymbol{t}}_{i,j}^{(k'+1)}-\hat{\boldsymbol{t}}_{i,j}^{(k')}=\boldsymbol{0}$ for $k_{i,j}^{i}>k'\geqslant k-\bar{k}$. In (b), we use the triangle inequality. In (c), we use \eqref{eq3}. Finally, combining \eqref{eq0}, the proof is completed.
\end{proof}
Now we are ready to prove Propostion \ref{th:node}.
\begin{proof}
	The bound \eqref{eq:lip} still holds, rewritten as:
	\begin{align} \label{eq4}
		-D&( \boldsymbol{\lambda }^{\left( k \right)} ) \leqslant -D( \boldsymbol{\lambda }^{\left( k-1 \right)} ) +\frac{L}{2}\sum\nolimits_{\left( i,j \right) \in \mathcal{E}}{\| \boldsymbol{\lambda }_{i,j}^{\left( k \right)}-\boldsymbol{\lambda }_{i,j}^{\left( k-1 \right)}\|} ^2+ \nonumber
		\\
		&\sum\nolimits_{\left( i,j \right) \in \mathcal{E}}{\left< -\nabla _{\boldsymbol{\lambda }_{i,j}^{}}D( \boldsymbol{\lambda }^{\left( k-1 \right)} ) ,\boldsymbol{\lambda }_{i,j}^{\left( k \right)}-\boldsymbol{\lambda }_{i,j}^{\left( k-1 \right)}\right>} 
	\end{align}
	The complicated situation is when only a single prosumer in one trading peer push updates. The last term on the right-hand side of \eqref{eq4} is re-written as:
	\begin{align}
		\,\,  &-\sum\nolimits_{\left( i,j \right) \in \mathcal{E}}\left< \nabla _{\boldsymbol{\lambda }_{i,j}^{}}D( \boldsymbol{\lambda }^{\left( k-1 \right)} ) ,\boldsymbol{\lambda }_{i,j}^{\left( k \right)}-\boldsymbol{\lambda }_{i,j}^{\left( k-1 \right)}\right> \nonumber
		\\
		&=-\sum\nolimits_{( i,j ) \in \mathcal{E}}\rho^{-1}\|  \boldsymbol{\lambda }_{i,j}^{\left( k \right)}-\boldsymbol{\lambda }_{i,j}^{\left( k-1 \right)}\|^2 + \sum\nolimits_{\left( i,j \right) \in \mathcal{E}} \label{eq:eq2}
		\\
		&\left< -\nabla _{\boldsymbol{\lambda }_{i,j}^{}}D( \boldsymbol{\lambda }^{\left( k-1 \right)}) +\rho^{-1}(\boldsymbol{\lambda }_{i,j}^{\left( k \right)}-\boldsymbol{\lambda }_{i,j}^{\left( k-1 \right)}),\boldsymbol{\lambda }_{i,j}^{\left( k \right)}-\boldsymbol{\lambda }_{i,j}^{\left( k-1 \right)}\right> \nonumber 
	\end{align}
	We next provide a worst case bound of the last asychoronous term on the right-hand side of the inequality \eqref{eq:eq2}:
	\begin{align}
		&\sum\nolimits_{( i,j ) \in \mathcal{E}} \left< -\left( \boldsymbol{t}_{i,j}^{\left( k \right)}+\boldsymbol{t}_{j,i}^{\left( k \right)} \right) +\left( \tilde{\boldsymbol{t}}_{i,j}^{\left( k \right)}+\tilde{\boldsymbol{t}}_{j,i}^{\left( k \right)} \right) ,\boldsymbol{\lambda }_{i,j}^{\left( k \right)}-\boldsymbol{\lambda }_{i,j}^{\left( k-1 \right)}\right> \nonumber
		\\
		=&-\sum\nolimits_{i \in \mathcal{N}}\left< \boldsymbol{t}_{i}^{\left( k \right)}-\tilde{\boldsymbol{t}}_{i}^{\left( k \right)},\boldsymbol{\lambda }_{i}^{\left( k \right)}-\boldsymbol{\lambda }_{i}^{\left( k-1 \right)}\right>  
		\\
		\leqslant& \sum\nolimits_{i \in \mathcal{N}}\left\| \boldsymbol{t}_{i}^{\left( k \right)}-\tilde{\boldsymbol{t}}_{i}^{\left( k\right)} \right\| \left\| \boldsymbol{\lambda }_{i}^{\left( k \right)}-\boldsymbol{\lambda }_{i}^{\left( k-1 \right)} \right\| \nonumber
	\end{align}
	It can be further bounded as follows:
	\begin{align}
		&\,\,  \left\| \boldsymbol{t}_{i}^{\left( k \right)}-\tilde{\boldsymbol{t}}_{i}^{\left( k \right)} \right\| \left\| \boldsymbol{\lambda }_{i}^{\left( k \right)}-\boldsymbol{\lambda }_{i}^{\left( k-1 \right)} \right\|  \nonumber
		\\
		\overset{\left( a \right)}{\leqslant} &\sum_{k'=k-\bar{k}}^{k'=k-1}{L_i\left\| \boldsymbol{\lambda }_{i}^{(k'+1)}-\boldsymbol{\lambda }_{i}^{(k')} \right\|}\left\| \boldsymbol{\lambda }_{i}^{\left( k \right)}-\boldsymbol{\lambda }_{i}^{\left( k-1 \right)} \right\|  \label{bound1}
		\\
		\overset{\left( b \right)}{\leqslant} &\frac{L_i}{2}\sum_{k'=k-\bar{k}}^{k'=k-1}{\left\| \boldsymbol{\lambda }_{i}^{\left( k \right)}-\boldsymbol{\lambda }_{i}^{\left( k-1 \right)} \right\| ^2+\left\| \boldsymbol{\lambda }_{i}^{(k'+1)}-\boldsymbol{\lambda }_{i}^{(k')} \right\| ^2}  \nonumber
		\\
		\leqslant &\frac{L_i\bar{k}}{2}\left\| \boldsymbol{\lambda }_{i}^{\left( k \right)}-\boldsymbol{\lambda }_{i}^{\left( k-1 \right)} \right\| ^2+\frac{L_i}{2}\sum_{k'=k-\bar{k}}^{k'=k-1}{\left\| \boldsymbol{\lambda }_{i}^{(k'+1)}-\boldsymbol{\lambda }_{i}^{(k')} \right\| ^2}  \nonumber
	\end{align}
	where (a) we use Corollory \ref{th:difference}. In (b), we use Young's inequality. Summing all the inequalities \eqref{bound1} for all $i$:
	\begin{align}
		&\,\, \sum\nolimits_{i\in \mathcal{N}}^{}{\left\| \boldsymbol{t}_{i}^{\left( k \right)}-\tilde{\boldsymbol{t}}_{i}^{\left( k \right)} \right\| \left\| \boldsymbol{\lambda }_{i}^{\left( k \right)}-\boldsymbol{\lambda }_{i}^{\left( k-1 \right)} \right\|}\nonumber
		\\
		&\leqslant \sum_{i\in \mathcal{N}}^{}{\frac{L_i\bar{k}}{2}\left\| \boldsymbol{\lambda }_{i}^{\left( k \right)}-\boldsymbol{\lambda }_{i}^{\left( k-1 \right)} \right\| ^2+\frac{L_i}{2}\sum_{k'=k-\bar{k}}^{k'=k-1}{\left\| \boldsymbol{\lambda }_{i}^{(k'+1)}-\boldsymbol{\lambda }_{i}^{(k')} \right\| ^2}} \nonumber
		\\
		&\overset{\left( a \right)}{=}\sum\nolimits_{\left( i,j \right) \in \mathcal{E}}^{}{\frac{\left( L_i+L_j \right) \bar{k}}{2}\left\| \boldsymbol{\lambda }_{i,j}^{\left( k \right)}-\boldsymbol{\lambda }_{i,j}^{\left( k-1 \right)} \right\| ^2} 
		\\
		&~~~+\sum\nolimits_{\left( i,j \right) \in \mathcal{E}}^{}{\frac{L_i+L_j}{2}\sum_{k'=k-\bar{k}}^{k'=k-1}{\left\| \boldsymbol{\lambda }_{i,j}^{(k'+1)}-\boldsymbol{\lambda }_{i,j}^{(k')} \right\| ^2}} \nonumber
		\\
		&\overset{\left( b \right)}{\leqslant} {\bar{L}\bar{k}\left\| \boldsymbol{\lambda }^{\left( k \right)}-\boldsymbol{\lambda }^{\left( k-1 \right)} \right\| ^2+\bar{L}\sum_{k'=k-\bar{k}}^{k'=k-1}{\left\| \boldsymbol{\lambda }^{(k'+1)}-\boldsymbol{\lambda }^{(k')} \right\| ^2}} \nonumber
	\end{align}
	where (a) is because $\boldsymbol{\lambda}_{i,j}$ appears in both $\boldsymbol{\lambda}_{i}$ and $\boldsymbol{\lambda}_{j}$. (b) is due to $\frac{\left( L_i+L_j \right)}{2}\leqslant \bar{L}:= \max_i L_i$.
	
	Combining all the bound above, and summing the inequality above from iteration $0$ to $k$, we get:
	\begin{align}
		&-D( \boldsymbol{\lambda }^{( k+1 )} ) \leqslant -D( \boldsymbol{\lambda }^{( 0 )} ) -( \frac{1}{\rho}-\frac{L}{2} ) \sum_k{\left\| \boldsymbol{\lambda }_{}^{(k-1)}-\boldsymbol{\lambda }_{}^{(k)} \right\| ^2} \nonumber
		\\
		&+\sum_k{\sum_{k'=k-\bar{k}}^{k'=k-1}{\bar{L}\left\| \boldsymbol{\lambda }_{}^{(k'+1)}-\boldsymbol{\lambda }_{}^{(k')} \right\| ^2}}+\bar{L}\bar{k}\sum_k{\left\| \boldsymbol{\lambda }^{( k+1 )}-\boldsymbol{\lambda }^{( k )} \right\| ^2}
		\\
		&\leqslant D( \boldsymbol{\lambda }^{( 0 )} ) -( \frac{1}{\rho}-\frac{L}{2} ) \sum_k{\left\| \boldsymbol{\lambda }_{}^{(k-1)}-\boldsymbol{\lambda }_{}^{(k)} \right\| ^2}
		\\
		&+\bar{L}\bar{k}\sum_k{\left\| \boldsymbol{\lambda }^{( k'+1 )}-\boldsymbol{\lambda }^{( k' )} \right\| ^2}+\bar{L}\bar{k}\sum_k{\left\| \boldsymbol{\lambda }^{( k+1 )}-\boldsymbol{\lambda }^{( k )} \right\| ^2}	\nonumber
	\end{align}
	The proof is then completed.
\end{proof}


\bibliographystyle{elsarticle-num}
\bibliography{example}






\end{document}